\documentclass[num-refs]{wiley-article}

\usepackage{siunitx}
\usepackage{amssymb}
\usepackage{amsmath}
\usepackage{bm}
\usepackage{caption}
\usepackage{subcaption}
\usepackage{graphicx}
\usepackage{dcolumn}
\usepackage{bm}
\usepackage{epstopdf}
\usepackage{epsfig}
\usepackage{hyperref}
\usepackage{url}

\newcommand{\R}{\mathbb{R}}

\newcommand{\Z}{\mathbb{Z}}

\newcommand\inp[2]{\left\langle #1, #2 \right\rangle}

\usepackage{anyfontsize}

\papertype{Original Article}
\paperfield{Journal Section}

\title{Maxwell fronts in the discrete nonlinear Schr\"odinger equations with competing nonlinearities}

\author{Farrell\ Theodore\ Adriano}
\author{Hadi Susanto}

\affil[]{Department of Mathematics, Khalifa University, PO Box 127788, Abu Dhabi, United Arab Emirates}

\corraddress{Farrell\ Theodore\ Adriano, Department of Mathematics, Khalifa University, PO Box 127788, Abu Dhabi, United Arab Emirates}
\corremail{100065157@ku.ac.ae}

\fundinginfo{Khalifa University, Competitive Internal Research Awards Grant (No.\ 8474000413/CIRA-2021-065), and Research \& Innovation Grant (No.\ 8474000617/RIG-2023-031 and No.\ 8474000789/RIG-S-2024-070)}

\runningauthor{Adriano \& Susanto}

\begin{document}

\begin{frontmatter}
\maketitle

\begin{abstract}
In discrete nonlinear systems, the study of nonlinear waves has revealed intriguing phenomena in various fields such as nonlinear optics, biophysics, and condensed matter physics. Discrete nonlinear Schr\"odinger (DNLS) equations are often employed to model these dynamics, particularly in the context of Bose-Einstein condensates and optical waveguide arrays. While the classical DNLS with cubic nonlinearity admits well-known solitonic solutions, the introduction of competing nonlinearities, such as quadratic-cubic and cubic-quintic terms, gives rise to new behaviors, including multistability and front formation. One such emergent structure, the Maxwell front, is characterized by stationary interfaces between two energetically equivalent steady states, occurring at a critical parameter known as the Maxwell point.

This paper investigates the existence and stability of Maxwell fronts in DNLS models with competing nonlinearities. Specifically, we examine the quadratic-cubic nonlinearity, as found in the discrete quantum droplets equation, and the cubic-quintic nonlinearity, both of which exhibit multistability. We explore the persistence of Maxwell fronts in both the anticontinuum limit (where the coupling between lattice sites is weak) and the continuum limit (where the coupling is strong). The stability of these fronts is analyzed through linear stability analysis, utilizing eigenvalue counting arguments and exponential asymptotic techniques. Our results provide new insights into multistability, front dynamics, and the role of competing nonlinearities in discrete wave systems.

The main contributions of this work include the characterization of Maxwell fronts in DNLS equations with competing nonlinearities, the analysis of their stability across different coupling regimes, and the application of novel asymptotic methods to investigate their behavior in the continuum limit.

\keywords{Discrete Nonlinear Schr\"odinger Equation, Maxwell Fronts, Competing Nonlinearities, Multistability, Stability Analysis, Exponential Asymptotics}
\end{abstract}
\end{frontmatter}

\section{Introduction}\label{sec1}

Nonlinear wave phenomena in discrete systems play a fundamental role in various areas of physics, including nonlinear optics, biophysics, and condensed matter~\cite{kivshar2003optical,braun2004frenkel,kevrekidis2009discrete,kevrekidis2015defocusing}. These dynamics are often modeled by discrete nonlinear Schr\"odinger (DNLS) equations, particularly in the study of Bose-Einstein condensates (BECs)~\cite{Bloch2005,Pethick_Smith_2008} and optical waveguide arrays~\cite{PhysRevLett.81.3383,PhysRevLett.86.3296}. In these systems, the DNLS typically incorporates a Kerr-type cubic nonlinearity. Depending on the sign of the nonlinear coefficient, the DNLS can support bright solitons in the focusing case and dark solitons in the defocusing case. The stability of these localized states has been widely studied~\cite{Pelinovsky2005,pelinovsky2008stability,PhysRevE.75.066608,Kapitula2001,PhysRevLett.82.85}.

While the classical DNLS with cubic nonlinearity admits a well-understood family of solitonic solutions, the introduction of competing nonlinearities—such as quadratic-cubic or cubic-quintic terms—gives rise to new behaviors, including multistability and front formation. This paper investigates one such emergent structure: the Maxwell front. A Maxwell front is a stationary interface between two energetically equivalent steady states, occurring at a critical parameter known as the Maxwell point.

Variants of the DNLS equation introduce additional phenomena absent in the purely cubic case. One such feature is multistability, as observed in the discrete quantum droplets equation~\cite{kusdiantara2024analysis,zhao2021discrete}, which was proposed in~\cite{petrov2015quantum} to account for the Lee-Huang-Yang (LHY) effect~\cite{lee1957eigenvalues}. The LHY correction arises from quantum fluctuations beyond the mean-field approximation, introducing a local quartic self-repulsive term in the nonlinear Schr\"odinger equation. This term counteracts the mean-field cubic attraction, thereby stabilizing localized structures that would otherwise collapse. The interplay of nonlinearities introduces a quadratic term in the governing equation, leading to a richer energy landscape. Studies have shown that this competition results in multistability and stability switching in optical lattices~\cite{zhou2019dynamics,dong2020multi,kartashov2024enhanced}, including their discrete counterparts~\cite{zhao2021discrete,kusdiantara2024analysis}.

In such systems, multistability manifests through the presence of multiple coexisting temporally stable uniform steady states. This is typically a result of the pinning phenomenon in discrete or spatially extended dynamical systems, particularly those with bistable configurations~\cite{susanto2011variational,matthews2011variational}. The existence of bistable uniform states permits the formation of front solutions connecting these states. 
At a specific parameter value---the so-called \emph{Maxwell point}---the stable uniform states are energetically equivalent. As a result, 
stationary front solutions, referred to as \emph{Maxwell fronts}, may arise. This balance in energy landscape leads to the phenomenon of front pinning and homoclinic snaking over a parameter window, as discussed in~\cite{kusdiantara2024analysis} for the discrete case and in~\cite{Kusdiantara2025} for the continuum case. The multistability of the discrete quantum droplets equation can be analyzed through the system's effective potential. Due to the presence of competing nonlinear terms, the relative energies of steady states vary with the bifurcation parameter. At the Maxwell point, the energies of the coexisting steady states coincide, giving rise to these stationary fronts. 

These stationary fronts are similar to the defocusing dark solitons of the cubic DNLS in the sense that they are heteroclinic connections between two uniform steady states. However, they differ from each other in the following way: dark solitons asymptote to nonzero backgrounds of equal magnitude, whereas the Maxwell front asymptote to backgrounds of different magnitude at either end of the far-field. Moreover, DNLS dark solitons exist for a large range of parameter values \cite{PhysRevE.75.066608}, in contrast to the Maxwell front which exists only at the Maxwell point. These differences alter the existence and stability properties of these fronts.

In this work, we investigate the existence and stability of Maxwell fronts in DNLS models with competing nonlinearities, focusing on the quadratic-cubic nonlinearity (as in the discrete quantum droplets equation) and the cubic-quintic nonlinearity, which also exhibits multistability~\cite{dean2015orientation}. We analyze these fronts in both the anticontinuum limit, where the coupling between lattice sites is weak, and the continuum limit, where the coupling is strong. Of particular interest are two canonical configurations: \emph{onsite Maxwell fronts}, centered at lattice sites, and \emph{intersite Maxwell fronts}, centered between lattice sites. These configurations persist across the coupling parameter range from the anticontinuum to the continuum limit.

The stability of these front solutions is examined using linear stability analysis. This includes an asymptotic analysis of the associated linearized operator and an eigenvalue counting argument, following the approach in~\cite{Chugunova2010,pelinovsky2008stability}. To facilitate the analysis in the continuum regime, we also employ the method of exponential asymptotics~\cite{KING_CHAPMAN_2001,dean2015orientation} to derive precise asymptotic expressions for the Maxwell fronts and subsequently utilize them in the stability analysis. A rigorous proof of the orbital stability of spatially continuous fronts has also been presented recently in \cite{holmer2025orbital}.

The main contributions of this paper are as follows: (i) we establish the existence of Maxwell fronts in discrete systems with quadratic-cubic and cubic-quintic nonlinearities, (ii) we analyze their persistence from the anticontinuum to the continuum limit, and (iii) we determine their linear stability through eigenvalue counting arguments and exponential asymptotic techniques. These findings provide new insights into multistability and front dynamics in discrete nonlinear wave systems.

The remainder of this paper is organized as follows. In Section~\ref{sec2}, we introduce the general DNLS model with competing nonlinearities and describe the framework for the linear stability analysis. Section~\ref{sec3} presents the analysis for the quadratic-cubic DNLS in both the anticontinuum and continuum limits. In Section~\ref{sec4}, we briefly outline the analogous analysis for the cubic-quintic nonlinearity and present corresponding results. Finally, Section~\ref{sec5} concludes the paper.

\section{Mathematical model}\label{sec2}
In this study, we consider the DNLS equation of the form
\begin{equation}\label{eqn:general_dnls}
    i \dot{\phi}_{n} = -\dfrac{C}{2}\Delta \phi_{n} - \mu \phi_{n} + F(|\phi_{n}|)\phi_{n},
\end{equation}
where $\phi_{n}$ denotes the complex-valued amplitude, $\mu$ is a parameter, $C$ is the coupling parameter, and $\Delta$ is the discrete Laplacian operator:
\begin{equation} \label{eqn:Delta}
    \Delta \phi_{n} = \phi_{n-1} - 2\phi_{n} + \phi_{n+1},
\end{equation}
and $F(|\phi_{n}|)$ represents the nonlinearity. The nonlinearities considered are the quadratic-cubic nonlinearity:
\begin{equation}\label{eqn:2-3_nonlinearity}
    F(|\phi_{n}|) = -|\phi_{n}| + |\phi_{n}|^{2},
\end{equation}
and the cubic-quintic nonlinearity:
\begin{equation}\label{eqn:3-5_nonlinearity}
    F(|\phi_{n}|) = -|\phi_{n}|^{2} + |\phi_{n}|^{4}.
\end{equation}
The opposing signs of the terms in $F$ represent the competing effects of the nonlinearity. As will be shown for both types of nonlinearities, this property leads to multistability in the system, i.e., the existence of multiple steady uniform states for a specific parameter value, $\mu = \mu_{M}$, which is referred to as the Maxwell point. At the Maxwell point, the two stable steady states have equal energy, so no state is energetically preferred over the other. Consequently, this allows for the existence of stationary front solutions, which we call the Maxwell fronts. The stationary Maxwell fronts satisfy the time-independent equation
\begin{equation} \label{general_dnls_time_indep}
    0 = -\dfrac{C}{2}\Delta \phi_{n} - \mu \phi_{n} + F(|\phi_{n}|)\phi_{n},
\end{equation}
where $\phi_{n}$ asymptotes to each of the stable uniform states as $n \to \pm \infty$. We shall restrict our attention to real-valued fronts, such that $\phi_{n} \in \R$ for all $n \in \Z$. The study of Maxwell fronts and their stability will be the main focus of this work.

The stability of these fronts will be analyzed through linear stability analysis. Denoting a stationary solution to \eqref{eqn:general_dnls} as $\phi_{n}(t) = \varphi_{n}$, we linearize around the stationary solution by introducing a perturbation of the form $\phi_{n} = \varphi_{n} + (p_{n} + i q_{n})$, where $p_{n} = p_{n}(t)$ and $q_{n} = q_{n}(t)$ are small perturbations, i.e., $|p_{n}|,|q_{n}| \ll 1$. Writing the perturbations as $p_{n} = u_{n} e^{\lambda t}$ and $q_{n} = v_{n} e^{\lambda t}$ leads to the eigenvalue problem
\begin{equation} \label{eqn:eigval_eqn}
    \lambda \begin{bmatrix}
        u \\ v
    \end{bmatrix} = \begin{bmatrix}
        O & L_{-} \\ -L_{+} & O
    \end{bmatrix}\begin{bmatrix}
        u \\ v
    \end{bmatrix} = \mathcal{L} \begin{bmatrix}
        u \\ v
    \end{bmatrix},
\end{equation}
where $u,v \in \ell^{2}(\Z)$. This system will determine the stability of the stationary solution $\varphi$. The corresponding linearized operators $L_{\pm}$ 
are given by
\begin{align}
    L_{-} &= -\dfrac{C}{2}\Delta - \mu + F(|\varphi_{n}|)
    , \label{eqn:Lm} \\
    L_{+} &= -\dfrac{C}{2}\Delta -\mu + F'(|\varphi_{n}|)|\varphi_{n}| + F(|\varphi_{n}|). \label{eqn:Lp} 
\end{align}

Due to the Hamiltonian nature of the system, the eigenvalues $\lambda$ of $\mathcal{L}$ appear in quartets: if $\lambda$ is an eigenvalue of $\mathcal{L}$, then $-\lambda$, $\overline{\lambda}$, and $-\overline{\lambda}$ (where the bar denotes complex conjugation) are also eigenvalues. Consequently, the stability of the solution is determined by the real part of the eigenvalues of $\mathcal{L}$. Specifically, if there exists an eigenvalue $\lambda$ such that $\text{Re}(\lambda) \neq 0$, the corresponding stationary solution is unstable; otherwise, it is neutrally stable.

\section{Quadratic-Cubic DNLS} \label{sec3}
We consider the quadratic-cubic DNLS, i.e., we take $F$ to be \eqref{eqn:2-3_nonlinearity}, and the system \eqref{eqn:general_dnls} then becomes
\begin{equation}\label{eqn:quad_cubic_dnls}
    i \dot{\phi}_{n} = -\dfrac{C}{2}\Delta \phi_{n} - \mu \phi_{n} - |\phi_{n}|\phi_{n} + |\phi_{n}|^{2}\phi_{n},
\end{equation}
where the stationary equation is
\begin{equation} \label{eqn:quad_cubic_dnls_stationary}
    0 = -\dfrac{C}{2}\Delta\varphi_{n} - \mu \varphi_{n} - |\varphi_{n}|\varphi_{n} + |\varphi_{n}|^{2}\varphi_{n}.
\end{equation}

The system \eqref{eqn:quad_cubic_dnls} is Hamiltonian, with the corresponding energy/Hamiltonian given by
\begin{equation}
    E(\phi) = \sum_{n = -\infty}^{\infty}\left[\dfrac{C}{2}|\phi_{n}-\phi_{n-1}|^{2} + \dfrac{\mu}{2}|\phi_{n}|^{2} + \dfrac{1}{3}|\phi_{n}|^{3} - \dfrac{1}{4}|\phi_{n}|^{4}\right].
\end{equation}
As noted in the previous section, system \eqref{eqn:quad_cubic_dnls} exhibits multistability. This can be seen by analyzing the stationary uniform solutions of \eqref{eqn:quad_cubic_dnls} in the form $\phi_{n}(t) = \varphi e^{i\theta_{n}}$ where $\varphi \geq 0$ and $\theta_{n} \in \R$, these give
\begin{equation}\label{eqn:2-3_uniform_solutions}
    \varphi^{\mathrm{low}} = 0, \quad \varphi^{\mathrm{mid}} = \dfrac{1-\sqrt{4\mu+1}}{2}, \quad \varphi^{\mathrm{up}} = \dfrac{1+\sqrt{4\mu+1}}{2},
\end{equation}
and the phase $\theta_{n} \in \R$ is arbitrary. The lower (trivial) uniform solution $\varphi^{\mathrm{low}}$ exists for all values of $\mu$, the middle solution $\varphi^{\mathrm{mid}}$ exists whenever $\mu \in [-1/4,0]$, and the upper solution $\varphi^{\mathrm{up}}$ exists when $\mu \geq -1/4$.
Upon doing the linear stability analysis as outlined in Section \ref{sec2}, with perturbations of the form $u_{n},v_{n} = e^{ikn}$, we obtain the dispersion relation for the spectrum $\lambda$ of the linearized operator $\mathcal{L}$ at the uniform state $\varphi$ as
\begin{equation}\label{eqn:2-3 disp relation}
    \lambda^{2} = -z^{2} + \varphi F'\left(\left|\varphi\right|\right)z
\end{equation}
where $z = C(\cos k -1) + \mu - F(|\varphi|)$. As $\lambda^{2} \in \R$ for all wavenumbers $k \in \R$, the uniform state $\varphi$ is unstable if there exists a $k$ such that $\lambda^{2} > 0$. 

It is now apparent that the lower (trivial) uniform solution $\varphi^{\mathrm{low}} = 0$ is always stable as the associated spectrum 
$\lambda = \pm i\left|\mu + C(\cos k - 1)\right|$ is purely imaginary for all wavenumbers $k$. For the nontrivial uniform solutions, $\mu - F(|\varphi|) = 0$ and $z$ reduces to $z = C(\cos k -1)$. In the case of the upper solution, we have $\varphi^{\mathrm{up}}F'\left(\left|\varphi^{\mathrm{up}}\right|\right) \geq 0$ for any $\mu \geq -1/4$, thus $\lambda^{2} \leq 0$ for any $k \in \R$, i.e. $\varphi^{\mathrm{up}}$ is always stable. Therefore, we have bistability of $\varphi^{\mathrm{low}}$ and $\varphi^{\mathrm{up}}$ for $\mu \geq -1/4$. On the other hand, the middle uniform solution $\varphi^{\mathrm{mid}}$ is unstable whenever $-1/4 < \mu < 0$ since $\varphi^{\mathrm{mid}}F'(|\varphi^{\mathrm{mid}}|) < 0$, thus $\lambda^{2} > 0$ for some small $k$.

Even though we have bistable uniform solutions $\varphi^{\mathrm{low}}$ and $\varphi^{\mathrm{up}}$ for $\mu > -1/4$, for certain values of $\mu$, specifically when $\mu > -2/9$, the upper uniform solution $\varphi^{\mathrm{up}}$ has a lower potential than the lower uniform solution, i.e. $E(\varphi^{\mathrm{up}}) < E\left(\varphi^{\mathrm{low}}\right)$, making it energetically favorable. When $\mu < -2/9$, the situation is reversed. A critical value of $\mu$, at which $\varphi^{\mathrm{low}}$ and $\varphi^{\mathrm{up}}$ have the same potential, is called the Maxwell point. This occurs when $\mu = \mu_{M} = -2/9$. This allows for stationary front solutions that connect the two stable uniform solutions, $\varphi^{\mathrm{low}}$ and $\varphi^{\mathrm{up}}$. Note that in this case, it suffices to consider real stationary solutions $\varphi_{n}$. This can be seen by noting that from the stationary equation \eqref{eqn:quad_cubic_dnls_stationary}, the quantity $J = \overline{\varphi}_{n}\varphi_{n+1} - \varphi_{n}\overline{\varphi}_{n+1}$ is invariant in $n$. As we assumed that $\varphi_{n} \to \varphi^{\mathrm{low}} = 0$ as $n \to -\infty$ (or $n \to \infty$), then $J = 0$ necessarily. This implies that either $\varphi_{n} = 0$ or $(\varphi_{n+1}/\varphi_{n}) = \overline{(\varphi_{n+1}/\varphi_{n})}$, i.e. $\varphi_{n}$ and $\varphi_{n+1}$ must have the same phase (modulo $\pi$). Therefore, we may assume without loss of generality that $\varphi_{n} \in \R$ for all $n \in \Z$.

At the Maxwell point, the three uniform solutions are 
\begin{equation}
    \varphi^{\mathrm{low}} = 0, \quad \varphi^{\mathrm{mid}} = \frac{1}{3}, \quad \varphi^{\mathrm{up}} = \frac{2}{3},
\end{equation}
and specific stationary front solutions persist for all values of $C$, where, in the continuum limit $(C \to \infty)$, they correspond to the heteroclinic orbits connecting the two fixed points $\varphi^{\mathrm{low}}$ and $\varphi^{\mathrm{up}}$. This can be seen through a phase plane analysis of the continuum analogue of stationary equation \eqref{eqn:quad_cubic_dnls_stationary}, which is given by
\begin{equation}\label{eqn:2-3 continuum eqn}
    0 = -\partial_{x}^{2}\varphi - \mu \varphi - |\varphi|\varphi + |\varphi|^{2}\varphi,
\end{equation}
where $x = n/\sqrt{C/2}$ and $\varphi_{n} \equiv \varphi(x)$ as $C \to \infty$.
At the Maxwell point $(\mu = \mu_{M})$, we have a heteroclinic orbit connecting $\varphi^{\mathrm{low}}$ and $\varphi^{\mathrm{up}}$ given by
\begin{equation} \label{eqn:cont_front_solution}
    \varphi(x) = \frac{2/3}{1+e^{-\sqrt{2}(x-x_{0})/3}},
\end{equation}
where $x_{0} \in \R$ is an arbitrary constant.
This heteroclinic orbit corresponds to the Maxwell front solution in the continuum limit. When $\mu < \mu_{M}$, as the upper uniform solution has lower energy, we have the so-called bubbles: a homoclinic orbit emanating from $\varphi^{\mathrm{up}}$. On the other side of $\mu_{M}$, when $\mu > \mu_{M}$, the lower uniform solution is of lower energy, and we have the so-called droplets: a homoclinic orbit from $\varphi^{\mathrm{low}}$. The phase plane representation of these solutions is illustrated in Figure \ref{fig:2-3_phase_portrait}.
\begin{figure}[tbhp]
    \centering
    \subfloat[$\mu=-0.2244$]{\includegraphics[scale=0.31]{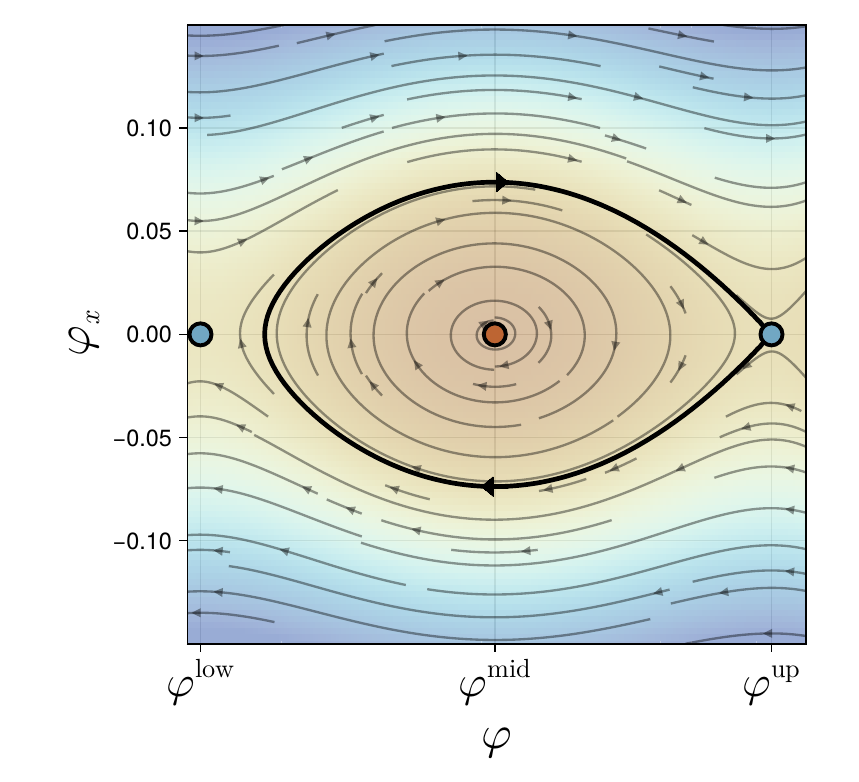}\label{subfig:phase_Mp_left}}
		\subfloat[$\mu={\mu^{(M)}}$]{\includegraphics[scale=0.31]{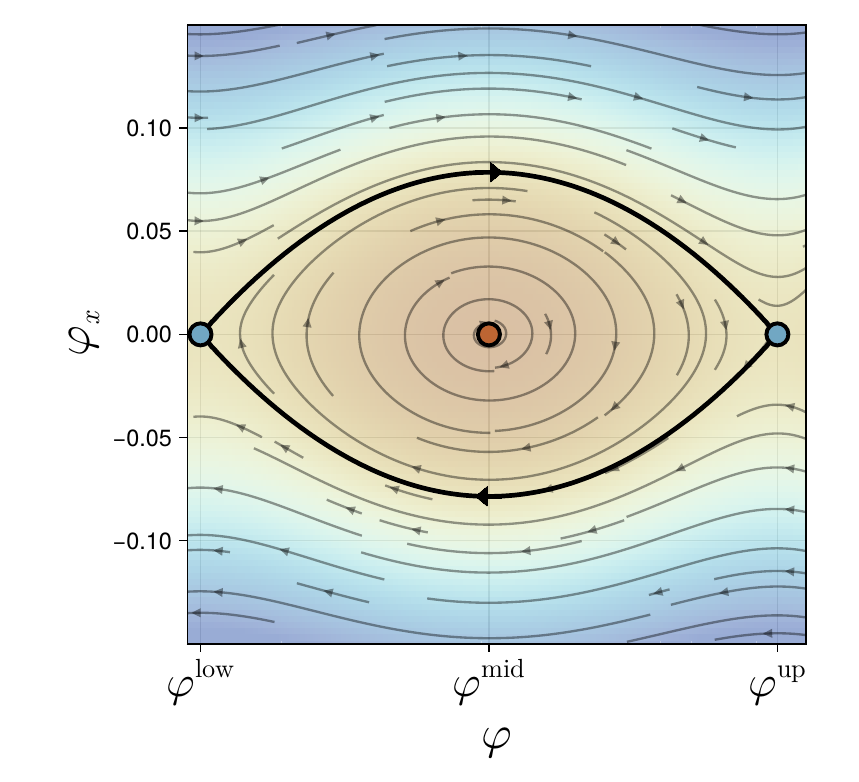}\label{subfig:phase_Mp}}
		\subfloat[$\mu=-0.22$]{\includegraphics[scale=0.31]{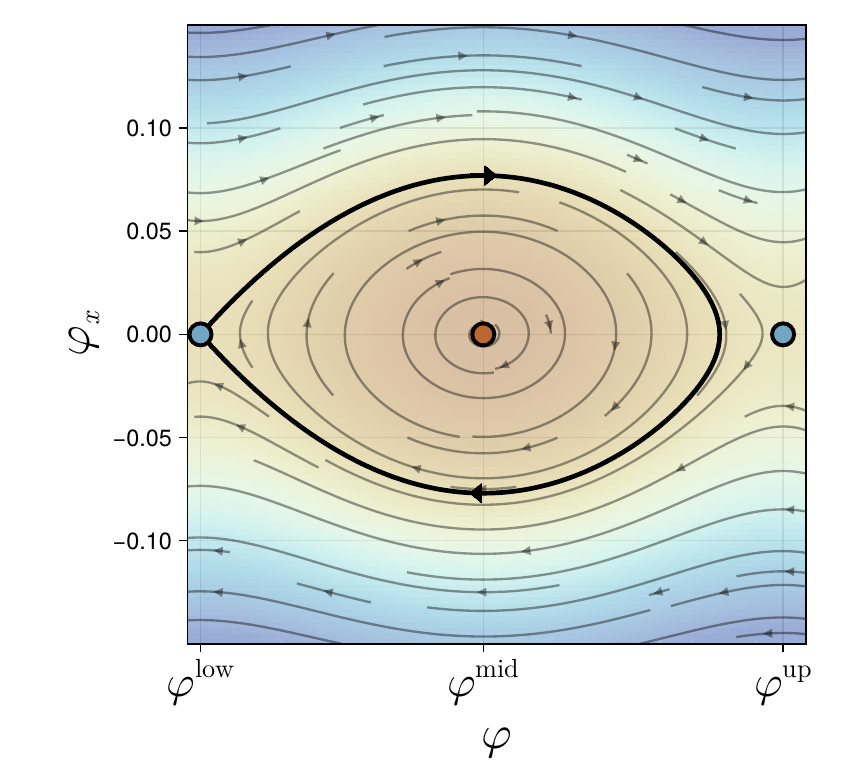}\label{subfig:phase_Mp_right}}
    \caption{Phase portrait of the quadratic-cubic stationary equation \eqref{eqn:quad_cubic_dnls_stationary} in the continuum limit $C \to \infty$ i.e. Eq. \eqref{eqn:2-3 continuum eqn} for values of $\mu$ around the Maxwell point $\mu_{M} = -2/9$. The left panel shows the case of $\mu = -0.2244 < \mu_{M}$, the middle panel shows the case $\mu = \mu_{M}$, and the right panel shows the case $\mu = -0.22 > \mu_{M}$. The black curves in each panel represent the homoclinic/heteroclinic orbits connecting the uniform states.}
    \label{fig:2-3_phase_portrait}
\end{figure}


\subsection{Anticontinuum Limit} \label{sec3.1}
\subsubsection{\texorpdfstring{$C = 0$}{C = 0}}  \label{sec3.1.1}
In the anticontinuum limit ($C = 0$), the stationary equation \eqref{eqn:quad_cubic_dnls_stationary} becomes decoupled, and the stationary solutions are characterized by $\varphi_{n}$ taking one of the three values: $\varphi^{\mathrm{low}}$, $\varphi^{\mathrm{mid}}$, or $\varphi^{\mathrm{up}}$. 
In this limit, fronts take the general form $ \bm{\varphi} = (\dots, \varphi^{\mathrm{low}},\varphi^{\mathrm{low}},\bm{\tilde{\varphi}},\varphi^{\mathrm{up}},\varphi^{\mathrm{up}},\dots)$ where $\bm{\tilde{\varphi}} \in \R^{N}$ is of any length $N \geq 1$ with $\tilde{\varphi}_{1} \in \{\varphi^{\mathrm{mid}},\varphi^{\mathrm{up}}\}$, $\tilde{\varphi}_{N} \in \{\varphi^{\mathrm{low}},
\varphi^{\mathrm{mid}}\}$, and $\tilde{\varphi}_{j}$ taking the value of any one out of the three uniform states $\varphi^{\mathrm{low}},\varphi^{\mathrm{mid}}$, or $\varphi^{\mathrm{up}}$ for $2 \leq j \leq N-1$. The fronts are then characterized by its values central sites, which is given by $\bm{\tilde{\varphi}}$. With this, we can associate to each front configuration a code given by $\bm{\tilde{\varphi}}$. For example, the code $(ulml)$ corresponds to the front with $\bm{\tilde{\varphi}} = (\varphi^{\mathrm{up}},\varphi^{\mathrm{low}},\varphi^{\mathrm{mid}},\varphi^{\mathrm{low}})$. 

In order to investigate the Maxwell fronts at $\mu = \mu_{M}$, we will focus our analysis on two particular limiting configurations of the Maxwell fronts. They are the two simplest fronts of codes with length $N \leq 1$. The first configuration is the onsite Maxwell front, where the center of the front is located at a lattice point. This configuration takes the following form:
\begin{equation}\label{eqn:ac_onsite_maxwell_front}
    \varphi_{n} = 
    \begin{cases}
        \varphi^{\mathrm{low}}, &n \leq -1, \\
        \varphi^{\mathrm{mid}}, &n = 0, \\
        \varphi^{\mathrm{up}}, &n \geq 1.
    \end{cases}
\end{equation}
The onsite front can be regarded as the front with code $(m)$.
The second configuration is the intersite Maxwell front, where the center of the front is located between two lattice points. This configuration is given by:
\begin{equation} \label{eqn:ac_intersite_maxwell_front}
    \varphi_{n} = 
    \begin{cases}
        \varphi^{\mathrm{low}}, &n \leq 0, \\
        \varphi^{\mathrm{up}}, &n \geq 1.
    \end{cases}
\end{equation}
The intersite front can be regarded as the front with $\bm{\tilde{\varphi}}$ of length $N = 0$ and will be labeled the code $(0)$.
As we shall see, these are the two configurations that persist up to the continuum limit.

The question of linear stability can now be addressed by analyzing the corresponding eigenvalue problem, which may be rewritten in the equivalent form
\begin{equation}\label{eqn:eigval_eqn_LpLm}
    L_{+}L_{-} v = -\lambda^{2} v.
\end{equation}
In the anticontinuum limit, $L_{+}$ and $L_{-}$ become multiplicative operators, and the eigenvalue problem decouples. Thus, we can determine the eigenvalues by considering the contribution from sites where $\varphi_{n}$ takes distinct values. For lattice sites $n$ such that $\varphi_{n} = \varphi^{\mathrm{low}} = 0$, we have
\begin{equation}
    (L_{+}L_{-}v)_{n} = \mu^{2}v_{n}.
\end{equation}
Therefore, the zero sites correspond to the pair of imaginary eigenvalues $\lambda = \pm i |\mu|$. For the nonzero sites, i.e., $n$ such that $\varphi_{n} = \varphi^{\mathrm{mid}}$ or $\varphi^{\mathrm{up}}$, we obtain
\begin{equation}
    (L_{+}L_{-}v)_{n} = 0,
\end{equation}
which gives the eigenvalue $\lambda = 0$.
\par
Thus, in the anticontinuum limit, the spectrum of $\mathcal{L}$ for both the onsite and intersite configurations consists of $\lambda = 0$, corresponding to the nonzero sites, and $\lambda = \pm i |\mu|$, corresponding to the zero sites of each configuration.


We also note that in the $C = 0$ limit, fronts can exist even if $\mu \neq \mu_{M}$. The subsequent analysis in the case of small $C$ holds for the fronts whenever the three uniform states \eqref{eqn:2-3_uniform_solutions} exist, i.e. $\mu \in (-1/4,0)$. However these fronts do not persist up until the continuum limit as $C \to \infty$ (see discussion in Section \ref{sec3.3}).

\subsubsection{\texorpdfstring{$0 < C \ll 1$}{C small}} \label{sec3.1.2}
When $0 < C \ll 1$, the profile of the 
stationary fronts in the anticontinuum limit will be perturbed due to the presence of coupling between the sites. However, since the coupling is weak, we can analyze the perturbation by writing the stationary solution as an asymptotic expansion in powers of $C$:
\begin{equation} \label{eqn:ac_asymp_expansion_phi}
    \varphi_{n} = \varphi_{n}^{(0)} + C \varphi_{n}^{(1)} + O(C^{2}) \quad (C \ll 1),
\end{equation}
where $\varphi^{(0)}$ will be taken to be the stationary front
in the anticontinuum limit. Substituting \eqref{eqn:ac_asymp_expansion_phi} into \eqref{eqn:quad_cubic_dnls_stationary}, we find that the $O(C)$ correction term to $\varphi^{(0)}$ is given by
\begin{equation} \label{eqn:ac_phi_O(C)}
    \varphi_{n}^{(1)} = \dfrac{\Delta \varphi_{n}^{(0)}}{2\left(3\left|\varphi_{n}^{(0)}\right|^{2} - 2\left|\varphi_{n}^{(0)}\right| - \mu\right)}.
\end{equation}
Thus, for the onsite Maxwell front, the correction terms are
\begin{equation}
    \varphi_{n}^{(1)} = 
    \begin{cases}
        \frac{3}{4}, &n = -1, \\
        0, &n = 0, \\
        -\frac{3}{4}, &n = 1, \\
        0, & \text{otherwise}.
    \end{cases}
\end{equation}
Whereas for the intersite Maxwell front, they are
\begin{equation}
    \varphi_{n}^{(1)} = 
    \begin{cases}
        \frac{3}{2}, &n = 0, \\
        -\frac{3}{2}, &n = 1, \\
        0, & \text{otherwise}.
    \end{cases}
\end{equation}
To establish the linear stability of the onsite and intersite Maxwell fronts, we will use an eigenvalue counting argument (following \cite{pelinovsky2008stability,Chugunova2010}) to determine the number of eigenvalues of $\mathcal{L}$ and their corresponding types by analyzing the operators $L_{+}$ and $L_{-}$ separately. The main result is the following theorem, which applies generically to the eigenvalue problem of the form \eqref{eqn:eigval_eqn_LpLm} under several assumptions on $L_{\pm}$ and the stationary solution $\bm{\varphi}$.

\begin{theorem} \label{thm:eigval_count}
    Assume that the stationary solution $\bm{\varphi}$ is a front solution satisfying $\varphi_{n} \to \varphi^{\mathrm{low}}$ as $n \to -\infty$ and $\varphi_{n} \to \varphi^{\mathrm{up}}$ as $n \to \infty$, and that $\bm{\varphi}$ does not change signs on $\Z$. Suppose further that $L_{\pm}$ have empty kernels in $\ell^{2}(\Z)$, and let $n(L_{\pm})$ denote the number of negative eigenvalues of $L_{\pm}$. Under these conditions, the point spectrum of $\mathcal{L}$ consists of $n(L_{+})$ real nonzero eigenvalues and no imaginary or complex eigenvalues.
\end{theorem}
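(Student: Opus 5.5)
The plan is to reduce the spectral problem \eqref{eqn:eigval_eqn_LpLm} to a self-adjoint one, using the fact that $L_{-}$ is nonnegative. The count then follows from Sylvester-type inertia arguments, or equivalently from the Pontryagin/Krein index theorem of \cite{Chugunova2010,pelinovsky2008stability}.

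\textbf{Step 1: $n(L_{-})=0$.} The stationary equation gives $(L_{-}\bm{\varphi})_{n}=0$ formally. Here $\bm{\varphi}\notin\ell^{2}(\Z)$, since $\varphi_{n}\to\varphi^{\mathrm{up}}\neq 0$, so this is not a kernel element. Because $\bm{\varphi}$ does not change sign, I take $\varphi_{n}\ge 0$. For compactly supported $v$ with $v_{n}=\varphi_{n}w_{n}$, a discrete ground-state transform (summation by parts) should give
\begin{equation*}
\inp{L_{-}v}{v}=\frac{C}{2}\sum_{n}\varphi_{n}\varphi_{n+1}\,(w_{n+1}-w_{n})^{2}\ \ge 0 .
\end{equation*}
At sites where $\varphi_{n}=0$, the operator restricted there acts as $-\mu+C>0$ on the diagonal, since $\mu<0$, so those sites only add positive terms. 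I would handle them separately or note that $\varphi_{n}>0$ for $C>0$. By density this yields $L_{-}\ge 0$ on $\ell^{2}(\Z)$. Together with the assumption $\ker L_{-}=\{0\}$, it gives $n(L_{-})=0$ and makes $L_{-}^{1/2}$ well defined and injective.

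\textbf{Step 2: self-adjoint reformulation.} Set $\gamma=-\lambda^{2}$ and $v=L_{-}^{1/2}w$. Then \eqref{eqn:eigval_eqn_LpLm} is equivalent, for eigenvectors in the appropriate domain, to
\begin{equation*}
K w:=L_{-}^{1/2}L_{+}L_{-}^{1/2}w=\gamma w,
\end{equation*}
with $K$ self-adjoint. Hence every $\gamma$ is real, and $\lambda^{2}\in\R$, which rules out complex quartets with $\mathrm{Re}\,\lambda\neq0\neq\mathrm{Im}\,\lambda$.

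\textbf{Step 3: counting the real eigenvalues.} The negative eigenvalues of $K$ correspond exactly to real pairs $\lambda=\pm\sqrt{-\gamma}$. Since $L_{-}^{1/2}$ is injective with dense range, $\inp{Kw}{w}=\inp{L_{+}v}{v}$ on $v=L_{-}^{1/2}w$. A variational (Sylvester inertia) argument should then show that the number of negative eigenvalues of $K$ equals $n(L_{+})$. This uses $\ker L_{+}=\{0\}$, so that no eigenvalue crosses zero.

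\textbf{Step 4: excluding imaginary eigenvalues.} To rule out isolated imaginary eigenvalues, I would invoke the closure relation of \cite{Chugunova2010}:
\begin{equation*}
n(L_{+})+n(L_{-})=N_{\mathrm{real}}+2N_{\mathrm{comp}}+2N_{\mathrm{imag}}^{-},
\end{equation*}
where $N_{\mathrm{imag}}^{-}$ counts imaginary eigenvalues of negative Krein signature $\inp{L_{-}v}{v}<0$. Since $n(L_{-})=0$ and $N_{\mathrm{real}}=n(L_{+})$ by Step 3, we get $N_{\mathrm{comp}}=N_{\mathrm{imag}}^{-}=0$. Positive-signature imaginary eigenvalues are also excluded directly, because $\inp{L_{-}v}{v}\ge 0$ forces any nonnegative eigenvalue $\gamma$ of $K$ to have positive signature.

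\textbf{Main obstacle.} The hardest step is the functional-analytic justification of Steps 2 and 3. The essential spectrum of $L_{-}$ touches $0$, since at the upper state $L_{-}$ is asymptotically $-\tfrac{C}{2}\Delta\ge0$ with spectrum $[0,2C]$. As a result, $L_{-}^{-1}$ is unbounded and $L_{-}^{1/2}$ does not have closed range, so the equivalence of eigenproblems and the inertia count need care.

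I would first work with the generalized problem $L_{+}v=\gamma L_{-}^{-1}v$ on the form domain, as in \cite{pelinovsky2008stability}, using a regularization $L_{-}+\epsilon$. I would then pass to $\epsilon\to0$, relying on $\ker L_{\pm}=\{0\}$ to prevent eigenvalues from escaping into or emerging from $\lambda=0$. Any remaining purely imaginary spectrum is then expected to lie in the continuous spectrum rather than the point spectrum. This last conclusion is the point I am least certain can be secured cleanly from the stated hypotheses alone.
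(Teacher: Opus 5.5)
Your route differs from the paper's, and for the real and complex eigenvalues it is cleaner and correct. The paper never symmetrizes. It rewrites the problem as $L_-v=-\lambda^2L_+^{-1}v$ and shifts it to $(L_-+\delta L_+^{-1})v=(\delta-\lambda^2)L_+^{-1}v$ because $0\in\sigma_{\mathrm{ess}}(L_-)$. It then quotes Theorem 3 of \cite{Chugunova2010} to get $n(L_+)=N_r^-+N_i^-+N_c$ and $n(L_-+\delta L_+^{-1})=N_r^++N_i^-+N_c$, uses an edge-bifurcation argument to show no negative eigenvalue of $L_-+\delta L_+^{-1}$ emerges from $0$, and gets $n(L_-)=0$ from discrete Sturm theory. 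You use the same positivity input (your ground-state identity is a quantitative form of that Sturm step), but you exploit it directly through $K=L_-^{1/2}L_+L_-^{1/2}$. This needs neither $L_+^{-1}$, nor the $\delta$-shift, nor the edge-bifurcation step. The paper's framework buys generality in exchange (it would still give signature counts if $n(L_-)>0$), which is not needed here.

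The obstacle you worry about is not real. On $\ell^2(\Z)$ all operators are bounded, $K$ is bounded self-adjoint, and the eigenvector correspondence uses only injectivity of $L_-^{1/2}$. Your substitution runs the wrong way, though.
\begin{itemize}
\item From $L_+L_-v=\gamma v$ with $v\neq0$, put $w=L_-^{1/2}v\neq0$; then $Kw=\gamma w$.
\item Conversely, from $Kw=\gamma w$ with $\gamma=-\lambda^2\neq0$, put $u=L_-^{1/2}w\neq 0$ and $v=-\lambda^{-1}L_+u$; then $(u,v)$ is an eigenvector of $\mathcal{L}$.
\end{itemize}
The inertia count is also elementary. $L_-^{1/2}$ maps any subspace on which $w\mapsto\langle L_+L_-^{1/2}w,L_-^{1/2}w\rangle$ is negative injectively onto one on which $L_+$ is negative, which gives $\le n(L_+)$. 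Conversely, the negative spectral subspace of $L_+$ is finite-dimensional and the form is uniformly negative on it (both far-field limits of $L_+$ are positive). So its basis can be approximated from the dense range of $L_-^{1/2}$ without losing negativity, which gives $\ge n(L_+)$. A finite negative index forces $\sigma(K)\cap(-\infty,0)$ to consist of exactly $n(L_+)$ eigenvalues with multiplicity, i.e.\ $n(L_+)$ real pairs. No $\epsilon$-regularization is needed.

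In Step 1 you can drop the discussion of zero sites. The hypothesis $\ker L_-=\{0\}$ forces $C>0$, since at $C=0$ every site with $\varphi_n\neq0$ carries a kernel vector. For $C>0$, a zero $\varphi_n=0$ of a nonnegative solution forces $\varphi_{n\pm1}=0$, hence $\bm{\varphi}\equiv0$.

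The genuine gap is Step 4. Its last sentence is a non sequitur. $L_-\ge0$ shows that every imaginary eigenvalue $\lambda=i\omega$ has positive signature, $\langle L_-v,v\rangle=\|L_-^{1/2}v\|^2>0$, i.e.\ corresponds to a positive eigenvalue $\omega^2$ of $K$; it does not show there are none. The closure relation adds nothing here: complex eigenvalues are already excluded and $N_{\mathrm{imag}}^-=0$ is automatic. It is also not directly available when $0\in\sigma_{\mathrm{ess}}(L_-)$, which is why the paper regularizes.

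The paper's proof has the same hole. It passes from $N_i^-=0$ to ``no imaginary eigenvalues''. But for $\lambda=i\omega$ one has $\langle L_-v,v\rangle=\omega^2\langle v,L_+^{-1}v\rangle$, so every imaginary eigenvalue has $\langle v,L_+^{-1}v\rangle>0$ and is invisible to that count.

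In fact this part of the statement appears to fail in the paper's own setting, so your expectation that leftover imaginary spectrum is purely continuous is wrong. Take the intersite front with $0<C\ll1$:
\begin{itemize}
\item At $n=0$, $\varphi_0\approx 3C/2$ lowers the diagonals of $L_+$ and $L_-$ by $3C$ and $3C/2$ relative to the other zero sites.
\item To first order, $L_+L_-$ on $\{n\le 0\}$ acts as $\frac{4}{81}+\frac{2C}{9}\bigl(-\Delta_D-\frac{9}{2}P_0\bigr)$. Here $\Delta_D$ is the Laplacian with $v_1=0$ (the upper side is off-resonant) and $P_0$ is the projection onto site $0$.
\item Its boundary state $v_n=(2/9)^{|n|}$, with eigenvalue $-49/18$, gives an isolated pair $\lambda\approx\pm i\bigl(\frac{2}{9}-\frac{49}{36}C\bigr)$.
\item This pair lies below the band $\pm i[\frac{2}{9},\frac{2}{9}+2C]$ and above the $O(\sqrt{C})$ band from the upper state.
\end{itemize}
A quasimode argument for $K$ makes this rigorous, since the gap to $\sigma_{\mathrm{ess}}(K)$ is of order $C$ while the residual can be made $O(C^2)$.

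What both arguments actually prove is this: no complex eigenvalues, no imaginary eigenvalues of negative signature, and exactly $n(L_+)$ real pairs. That is all the paper's stability conclusions use, and it is the form in which you should state the result.
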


\begin{proof}
    Since $L_{+}$ has an empty kernel, we can consider \eqref{eqn:eigval_eqn_LpLm} equivalently as
    \begin{equation} \label{eqn:eigval_eqn_Lm_Lp_inv}
        L_{-}v = -\lambda^{2}L_{+}^{-1}v.
    \end{equation}
    Furthermore, since $L_{-}\bm{\varphi} = 0$, $L_{-}$ is not invertible, so we shift \eqref{eqn:eigval_eqn_Lm_Lp_inv} to 
    \begin{equation}
        \left(L_{-} + \delta L_{+}^{-1}\right)v = (-\lambda^{2} + \delta)L_{+}^{-1}v,
    \end{equation}
    for sufficiently small $\delta > 0$. 
    \par
    Now, by analyzing the far-field as $|n| \to \infty$, since $\varphi_{n} \to \varphi_{\mathrm{up}}$ as $n \to \infty$ and $\varphi_{n} \to \varphi_{\mathrm{low}}$ as $n \to -\infty$, Weyl's Essential Spectrum Lemma gives that the essential spectrum of $L_{-}$ and $L_{+}$ are respectively bounded below by $0$ and $-\mu_{M} = 2/9$. The properties P1 and P2 of \cite{Chugunova2010} are satisfied, and we may invoke Theorem 3 of \cite{Chugunova2010} to obtain a relation between the eigenvalue counts of $L_{-} + \delta L_{+}^{-1}$, $L_{+}$, and $\mathcal{L}$ as
    \begin{align}
        n(L_{+}) &= N_{r}^{-} + N_{i}^{-} + N_{c} \label{eqn:eigval_count_plus}\\
        n(L_{-} + \delta L_{+}^{-1}) &= N_{r}^{+} + N_{i}^{-} + N_{c} \label{eqn:eigval_count_min}
    \end{align}
    where $n(A)$ denotes the number of negative eigenvalues of the operator $A$, $N_{r}^{\pm}$ denotes the number of real eigenvalues of $\mathcal{L}$ such that $\inp{v}{L_{+}^{-1}v} \geq 0$ ($\leq 0$), $N_{i}^{-}$ denotes the number of imaginary eigenvalues (with positive imaginary part) of $\mathcal{L}$ with $\inp{v}{L_{+}^{-1}v} \leq 0$, and $N_{c}$ denotes the number of complex eigenvalues of $\mathcal{L}$ in the first quadrant.
    \par
    We will now show that $n(L_{-} + \delta L_{+}^{-1}) = n(L_{-})$. Due to the continuity of eigenvalues and the relative compactness of $L_{+}^{-1}$ with respect to $L_{-}$, we immediately have $n(L_{-}) \leq n(L_{-} + \delta L_{+}^{-1})$. On the other hand, it is possible that as $L_{-}$ is perturbed by $\delta L_{+}^{-1}$, a negative eigenvalue might bifurcate from the lower edge of the essential spectrum of $L_{-}$, which is at $0$. However, we will show that this does not occur. Consider the eigenvalue equation for $L_{-} + \delta L_{+}^{-1}$:
    \begin{equation}\label{eqn:eigval_eqn_Lm+dLp-1}
    (L_{-} + \delta L_{+}^{-1})w = \omega w.
    \end{equation}
    We can rewrite \eqref{eqn:eigval_eqn_Lm+dLp-1} as
    \begin{equation}
    (L + \delta B)w = \omega w,
    \end{equation}
    where
    \begin{align}
    L &= L_{-} + \delta L_{+,\infty}^{-1}, \quad B = L_{+}^{-1}\left(L_{+,\infty} - L_{+}\right)L_{+,\infty}^{-1} = L_{+}^{-1}\left(2|\bm{\varphi}| - 3|\bm{\varphi}|^{2}\right)L_{+,\infty}^{-1}.
    \end{align}
    Here, $L_{+,\infty} = -\dfrac{C}{2}\Delta - \mu$ represents the limiting form of the operator $L_{+}$ as $|x| \to \infty$. The operator $B$ is a relatively compact perturbation to $L$, and the essential spectrum of $L$ is bounded below by $\delta/|\mu| > 0$.
    \par
    By applying the theory of edge bifurcations \cite{kapitula_stanstede2004_eigenvalues_evans,kapitula_sandstede_2002_edge_bifurcation_evans}, any edge bifurcation that occurs when $\delta \neq 0$ and originates at the lower edge of the essential spectrum of $L$ would appear as
    \begin{equation}
        \omega(\delta) = \frac{\delta}{|\mu|} - a \delta^{2} + \mathcal{O}(\delta^{3}),
    \end{equation}
    where $a > 0$ is a constant. Thus, for sufficiently small positive $\delta$, $\omega(\delta) > 0$, which implies that the number of negative eigenvalues of $L_{-} + \delta L_{+}^{-1}$ remains unchanged. Consequently, $n(L_{-} + \delta L_{+}^{-1}) = n(L_{-})$ for small enough $\delta > 0$.
    \par
    Now, since $L_{-}\bm{\varphi} = 0$ and $\bm{\varphi}$ does not change signs on $\Z$, by discrete Sturm theory \cite{levy1992finite}, $L_{-}$ does not have a negative eigenvalue, therefore $n(L_{-}) = 0$. Equation \eqref{eqn:eigval_count_min} immediately gives that $N_{r}^{+} = N_{i}^{-} = N_{c} = 0$, thus $\mathcal{L}$ does not have any imaginary or complex eigenvalues. Finally, \eqref{eqn:eigval_count_plus} gives the relation
    \begin{equation} \label{eqn:eigval_count_plus_2}
        n(L_{+}) = N_{r}^{-},
    \end{equation}
    i.e., the number of real eigenvalues of $\mathcal{L}$ (with $\inp{v}{L_{+}^{-1}v} \leq 0$) is equal to the number of negative eigenvalues of $L_{+}$. $\blacksquare$
\end{proof}

Theorem \ref{thm:eigval_count} implies that any instability of the stationary solution $\bm{\varphi}$ can only occur due to pairs of real eigenvalues of $\mathcal{L}$. Moreover, the number of such pairs is given by the number of negative eigenvalues of $L_{+}$.

\par
It now remains to check whether the assumptions of Theorem \ref{thm:eigval_count} are satisfied by $L_{\pm}$ and $\bm{\varphi}$ when $0 < C \ll 1$. For small $C > 0$, the onsite and intersite Maxwell fronts remain positive at all lattice points, as can be seen from the perturbation analysis, and they satisfy the far-field conditions in the assumption of the theorem. At $C = 0$, $L_{+}$ reduces to a multiplicative operator where
\begin{equation}  \label{eqn:Lp_C=0}
    (L_{+}\psi)_{n} = 
    \begin{cases}
        -\mu_{M} \psi_{n}, &\text{if } \varphi_{n}^{(0)} =  \varphi^{\mathrm{low}} \text{ or } \varphi_{n}^{(0)} = \varphi^{\mathrm{up}}, \\
        \left(\frac{\mu_{M}}{2}\right) \psi_{n}, &\text{if } \varphi_{n}^{(0)} = \varphi^{\mathrm{mid}}.
    \end{cases}
\end{equation}
Thus, $L_{+}$ is invertible. Therefore, it remains invertible when the coupling is small but finite, and its kernel is empty. The kernel of $L_{-}$ is also empty, since $L_{-}\bm{\varphi} = 0$, $\bm{\varphi} \in \ell^{\infty}(\Z)$, and $\varphi \notin \ell^{2}(\Z)$. The onsite and intersite Maxwell fronts are also nonnegative on $\Z$, and thus they do not change signs on $\Z$. The assumptions of Theorem \ref{thm:eigval_count} are then satisfied, and we can apply the result of the theorem for the corresponding onsite and intersite Maxwell front solutions.
\par
We first discuss the stability of the intersite Maxwell front. At $C = 0$, the intersite Maxwell front contains no sites $n$ such that $\varphi_{n} = \varphi^{\mathrm{mid}}$. Thus, the operator $L_{+}$ reduces to $L_{+} = -\mu_{M} I$, where $I$ is the identity operator. Therefore, at $C = 0$, the spectrum of the operator $L_{+}$ corresponding to the intersite Maxwell front consists of only the point $-\mu_{M} > 0$. When the coupling $C$ is small but finite, the operator $L_{+}$ at $C = 0$ is perturbed by the discrete Laplacian operator $(C/2)\Delta$, and the spectrum splits from $-\mu_{M}$. By continuity of the eigenvalues in $C$, any eigenvalue which splits from $-\mu_{M}$ must remain $O(C)$-close to $-\mu_{M}$ for small enough $C$, and thus for sufficiently small $C$, $L_{+}$ has no negative eigenvalue, i.e., $n(L_{+}) = 0$. Therefore, we conclude that for $0 < C \ll 1$, the intersite Maxwell front is neutrally stable.
\par
Next, we turn to the case of the onsite Maxwell front. In the case of the onsite Maxwell front, at $C = 0$, the operator $L_{+}$ has one negative eigenvalue $\mu_{M}/2$, owing to the site $n = 0$, since $\varphi_{0} = \varphi^{\mathrm{mid}}$. As a result, when $C > 0$, this negative eigenvalue is perturbed and remains negative for small enough $C$, thus $n(L_{+}) = 1$. We can then conclude that the onsite Maxwell front is unstable for small $C > 0$ due to a pair of real eigenvalues of the stability operator $\mathcal{L}$. 
\par
Furthermore, we can approximate this pair of real eigenvalues of $\mathcal{L}$ by analyzing the eigenvalues of the operator $L_{+}L_{-}$ for small $C$. Observe that any real eigenvalues of $L_{+}L_{-}$ for small positive $C$ must arise from the zero eigenvalue at $C = 0$. Therefore, we will analyze the splitting of the zero eigenvalue of $L_{+}L_{-}$ for $0 < C \ll 1$. When $C = 0$, an eigenvector $v$ of $L_{+}L_{-}$ is supported on the sites $n \geq 0$, which motivates the following decaying ansatz for $v$ when $C > 0$:
\begin{equation} \label{eqn:ac_eigvec_v_ansatz}
    v_{n} = 
    \begin{cases}
        0, &n \leq  -1, \\
        1, &n = 0, \\
        \beta l^{n}, &n \geq 1,
    \end{cases}
\end{equation}
where $\beta$ is a to-be-determined constant and $l$ is the decay rate of $v$ such that $|l| < 1$, which will be determined. Now, expanding $L_{+}L_{-}$ and retaining terms up to order $O(C)$, we have
\begin{equation} \label{eqn:ac_LpLm_O(C)}
    \begin{split}
    (L_{+}L_{-}v)_{n} = &\left[\mu_{M}^{2} + 3\mu_{M}|\varphi_{n}| + (2-4\mu_{M})|\varphi_{n}|^{2} - 5 |\varphi_{n}|^{3} + 3|\varphi_{n}|^{4}\right]v_{n} \\
    &+ C\left[\mu_{M} \Delta v_{n} + \frac{1}{2} \Delta(|\varphi_{n}|v_{n}) + |\varphi_{n}|\Delta v_{n} - \frac{1}{2} \Delta\left(|\varphi_{n}|^{2}v_{n}\right) - \frac{3}{2}|\varphi_{n}|^{2}\Delta v_{n}\right].
    \end{split}
\end{equation}
Now, substituting the approximation for $\varphi$ up to $O(C)$ as in \eqref{eqn:ac_asymp_expansion_phi}, i.e., $\varphi = \varphi^{(0)} + C \varphi^{(1)}$, and the approximation of $L_{+}L_{-}$ from \eqref{eqn:ac_LpLm_O(C)}, while using the ansatz for $v$ as \eqref{eqn:ac_eigvec_v_ansatz} in the eigenvalue equation \eqref{eqn:eigval_eqn_LpLm}, we obtain three equations for $-\lambda^{2}$, each corresponding to the site $n = 0$, $n = 1$, and the far-field $n \gg 1$. Matching the equations from these sites would determine the constant $\beta$ and the decay rate $l$, from which we obtain a candidate for an eigenvector $v$. The calculations yield
\begin{equation}
    l = \dfrac{-\sqrt{17}+5}{2}, \quad \beta = \dfrac{2l}{2-l}.
\end{equation}
This gives the $O(C)$ approximation for $\lambda^{2}$ as
\begin{equation} \label{eqn:ac_lambda_approx}
    \lambda^{2} = \frac{{\left(3 \, \sqrt{17} + 13\right)}}{9{\left(\sqrt{17} + 5\right)}}C  + O(C^{2}).
\end{equation}

Another approach to arrive at the instability is to analyze the eigenvalue arising from the central site, $n = 0$, at which $\varphi_{n} = \varphi^{\mathrm{mid}}$. The reasoning for this is as follows. For small $C$, the sites $\varphi_{n}$ with $|n| \geq 2$ remain unchanged (up to first order in $C$), i.e., they take on the uniform background solutions $\varphi^{\mathrm{low}}$ or $\varphi^{\mathrm{up}}$, and thus they contribute to the continuous spectrum of $\mathcal{L}$. The sites $|n| = 1$ also remain $O(C)$-close to the uniform backgrounds. Therefore, if any instability occurs, it must be due to the central site $n = 0$. We shall refer to this as the one-site approximation.
\par
To derive the one-site approximation of the unstable eigenvalue, we write $L_{\pm}$ in a power series of $C$ as
\begin{align}
    L_{\pm} = L_{\pm}^{(0)} + C L_{\pm}^{(1)} + O(C^{2}),
\end{align}
where $L_{\pm}^{(0)}$ are the operators $L_{\pm}$ when $C = 0$, and the $O(C)$ terms $L_{\pm}^{(1)}$ are given by
\begin{align}
    L_{+}^{(1)} &= -\frac{1}{2}\Delta - 2 \varphi_{n}^{(1)} + 6\varphi_{n}^{(0)}\varphi_{n}^{(1)}, \\
    L_{-}^{(1)} &= -\frac{1}{2}\Delta - \varphi_{n}^{(1)} + 2\varphi_{n}^{(0)}\varphi_{n}^{(1)}.
\end{align}
When $C = 0$, the central site corresponds to the zero eigenvalue of $L_{+}L_{-}$, so we perturb off of this zero eigenvalue. We write the perturbation expansion of the eigenvalues and eigenvectors as
\begin{align}
    -\lambda^{2} = \theta &= \theta^{(0)} + C\theta^{(1)} + O(C^{2}), \\
    v &= v^{(0)} + C v^{(1)} + O(C^{2}),
\end{align}
where $\theta^{(0)} = 0$ and the leading-order eigenvector $v^{(0)}$ is taken to be active only at the central site, i.e., $v^{(0)}_{0} = 1$ and $v^{(0)}_{n} = 0$ for $n \geq 0$, following the preceding discussion. The equation at $O(1)$ is then automatically satisfied by the choice of $v^{(0)}$. At $O(C)$, we obtain the equation
\begin{equation}
    \left[L_{+}^{(0)}L_{-}^{(0)} - \theta^{(0)}\right]v_{1} = \left[\theta^{(1)} - L_{+}^{(1)}L_{-}^{(0)} - L_{+}^{(0)}L_{-}^{(1)}\right]v_{0} = f.
\end{equation}
By the Fredholm alternative, the solvability of this equation is given by $\inp{f}{v_{0}^{(0)}} = 0$, from which we obtain the $O(C)$ correction term of the eigenvalue as $\theta^{(1)} = -1/9$. Thus, the one-site approximation of the unstable eigenvalue is
\begin{align}\label{eqn:ac lambda approx one site approx}
    \lambda^{2} = \frac{1}{9}C + O(C^{2}).
\end{align}

The qualitative stability result for small $C$ extends similarly in the case of general front configurations with length $N \geq 2$. Consider a general stationary front $\bm{\varphi}$ with code $\mathcal{C}$. As the nontrivial uniform steady states $\varphi^{\mathrm{mid}}$ and $\varphi^{\mathrm{up}}$ \eqref{eqn:2-3_uniform_solutions} are positive for $\mu \in (-1/4,0)$, the perturbation analysis shows that for small enough $0 < C \ll 1$, $\bm{\varphi}$ remains nonnegative and they satisfy the far-field conditions of Theorem \ref{thm:eigval_count}. It remains to check the assumption on the linearized operators $L_{\pm}$. $L_{-}$ has empty kernel since $\bm{\varphi} \in \ell^{\infty}(\Z)$ and it is not in $\ell^{2}(\Z)$. The operator $L_{+}$ acts as a multiplicative operator
\begin{equation}  \label{eqn:Lp_C=0_general_mu}
    (L_{+}\psi)_{n} = 
    \begin{cases}
        -\mu \psi_{n}, &\text{if } \varphi_{n}^{(0)} =  \varphi^{\mathrm{low}}, \\
        \left[\varphi_{n}^{(0)}F'\left(|\varphi_{n}^{(0)}|\right)\right] \psi_{n}, &\text{if } \varphi_{n}^{(0)} = \varphi^{\mathrm{mid}} \text{ or } \varphi_{n}^{(0)} = \varphi^{\mathrm{up}},
    \end{cases}
\end{equation}
at $C = 0$. Since $\varphi F'(|\varphi|) \neq 0$ for the nontrivial uniform states, $L_{+}$ is an invertible operator at $C = 0$ and it remains invertible for small $C$. Moreover, since $\varphi^{\mathrm{mid}}F'(|\varphi^{\mathrm{mid}}|) < 0$ and $\varphi^{\mathrm{up}}F'(|\varphi^{\mathrm{up}}|) > 0$ for $\mu \in (-1/4,0)$, we infer that the number of negative eigenvalues of $L_{+}$ at $C=0$ is given by the number of sites taking on the value $\varphi^{\mathrm{mid}}$. By persistence of eigenvalues, this number remains the same for small enough $C$. Therefore, the number of pairs of real unstable eigenvalues of $\mathcal{L}$ is given by
\begin{equation} \label{eqn:number of unstable eigenvalues}
    n(L_{+}) = \#\left\{n \in \Z: \varphi_{n}^{(0)} = \varphi^{\mathrm{mid}}\right\},
\end{equation}
in other words, it is the number of times the character '$m$' appears in the code $\mathcal{C}$. 
Note that $n(L_{+})$ is always finite for any front configuration since the front can only take on the value $\varphi^{\mathrm{mid}}$ in its center $\bm{\tilde{\varphi}}$ which is of finite length $N$. In light of this analysis, we see that the stability of a front (for small $C$) inherits that of the unstable uniform state $\varphi^{\mathrm{mid}}$ if the anticontinuum front (its $C=0$ configuration) takes this value on some the lattice sites. The approximation for the eigenvector and eigenvalues can be done similarly as in the onsite and intersite case.


The asymptotic results \eqref{eqn:ac_lambda_approx} and \eqref{eqn:ac lambda approx one site approx} are compared with the numerics in Figure \ref{fig:ac_eigval_compare}, yielding a fairly good agreement for small values of $C$. 
While both approximations \eqref{eqn:ac_lambda_approx} and \eqref{eqn:ac lambda approx one site approx} agree with the numerical calculations asymptotically as $C \to 0$, the asymptotic approximation \eqref{eqn:ac_lambda_approx} provides a more accurate fit with the numerical results for slightly larger values of $C$.
\begin{figure}[htbp]
    \centering
    \includegraphics[width=0.6\linewidth]{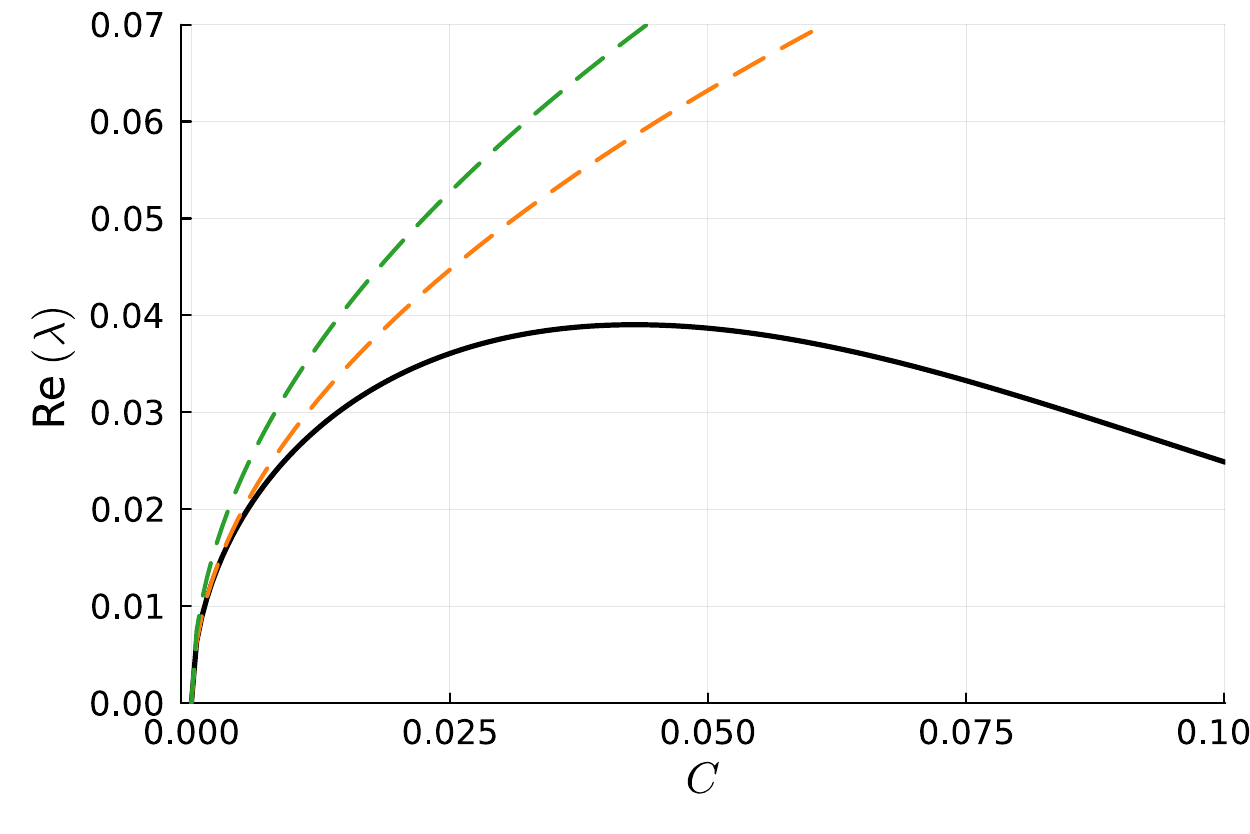}
    \caption{Real part of the eigenvalue of $\mathcal{L}$ for the onsite quadratic-cubic Maxwell front for $C \in [0,0.1]$. The black solid line shows the numerically computed eigenvalue, the orange dashed line shows the asymptotic approximation \eqref{eqn:ac_lambda_approx}, and the green dashed line shows the one-site approximation \eqref{eqn:ac lambda approx one site approx}.}
    \label{fig:ac_eigval_compare}
\end{figure}
As $C$ gets larger, the real part of the unstable eigenvalue reaches a maximum at a certain value of $C$, after which it decreases as $C$ increases (see section \ref{sec3.3} for further elaboration).


\subsection{Continuum Limit} \label{sec3.2}
As noted in the beginning of this section, in the continuum limit, Maxwell fronts exist only at the Maxwell point $\mu = \mu_{M}$, thus we will focus on the case $\mu = \mu_{M}$ in this and the subsequent sections. We will now analyze the stability of Maxwell fronts in the continuum limit, where the coupling between sites is large, i.e., $C \gg 1$. In this limit, it is instructive to introduce a small parameter $h$ and rewrite the coupling parameter $C$ as $C = 2/h^{2}$ and consider the equivalent case of $h \to 0$. With this, equation \eqref{eqn:quad_cubic_dnls} becomes
\begin{equation}
    i \dot{\phi}_{n} = -\dfrac{1}{h^{2}}\Delta \phi_{n} - \mu \phi_{n} - |\phi_{n}|\phi_{n} + |\phi_{n}|^{2}\phi_{n}.
\end{equation}
As the coupling between lattice sites is large, the variation of $\phi$ between neighboring sites is expected to be small, implying that $\phi$ varies slowly with respect to $n$. To this end, we introduce the slow scale $x = hn$ and assume that $\phi$ depends on this slow scale, i.e. $\phi_{n} \equiv \phi(x,t)$. Under the slow scale, the Laplacian term can be approximated as
\begin{equation}
    \begin{split}
    \Delta \phi_{n} &= \phi(x+h,t) - 2\phi(x,t) + \phi(x-h,t) \\
    &= 2\sum_{k \geq 1} \dfrac{h^{2k}}{(2k)!} \partial_{x}^{2k}\phi.
    \end{split}
\end{equation}
\subsubsection{\texorpdfstring{$C \to \infty$ $(h = 0)$}{h = 0}} \label{sec3.2.1}
By writing the Laplacian in a power series of $h$, we can express equation \eqref{eqn:quad_cubic_dnls} as
\begin{equation} \label{eqn:c_quad_cubic_dnls_C_large}
    i \partial_{t}\phi = -2\sum_{k\geq 1} \dfrac{h^{2k-2}}{(2k)!}\partial_{x}^{2k}\phi - \mu_{M} \phi - |\phi|\phi + |\phi|^{2}\phi.
\end{equation}
In the continuum limit ($C \to \infty$ or equivalently $h = 0$), equation \eqref{eqn:c_quad_cubic_dnls_C_large} becomes the continuous equation
\begin{equation} \label{eqn:cont_quad_cubic_nls}
    i \phi_{t} = -\partial_{x}^{2}\phi - \mu_{M} \phi - |\phi|\phi + |\phi|^{2}\phi.
\end{equation}
Correspondingly, the stationary equation \eqref{eqn:quad_cubic_dnls_stationary} becomes \eqref{eqn:2-3 continuum eqn} and we have the continuum analogue to the Maxwell front given by \eqref{eqn:cont_front_solution} where $x_{0}$ is an arbitrary shifting constant.
The fact that the shifting constant $x_{0}$ is freely chosen reflects the translational invariance of the continuum equation \eqref{eqn:2-3 continuum eqn}.
Thus, in the continuum limit, the Maxwell front is not limited to the onsite and intersite configurations (corresponding to $x_{0} = 0$ and $x_{0} = 1/2$, respectively), but exists for any $x_{0} \in \mathbb{R}$, leading to an infinite number of Maxwell fronts.
\par
In this limit, the linearized operators, $L_{+}$ and $L_{-}$, simplify to the following differential operators:
\begin{align}
    L_{-} &= -\partial_{x}^{2} - \mu_{M} - |\varphi| + |\varphi|^{2},  \label{eqn:cont_Lp} \\
    L_{+} &= -\partial_{x}^{2} - \mu_{M} - 2|\varphi| + 3|\varphi|^{2}.
\end{align}
A far-field analysis of each differential operator, along with an application of Weyl's Essential Spectrum Lemma, shows that the essential spectrum of $L_{+}$ and $L_{-}$ is bounded below by $|\mu_{M}|$ and $0$, respectively. Specifically, their essential spectra are given by $\sigma_{e}(L_{+}) = \left[|\mu_{M}|, \infty\right)$ and $\sigma_{e}(L_{-}) = [0, \infty)$. Furthermore, since $L_{-}\varphi = 0$, differentiating with respect to $x$ gives $L_{+}\varphi' = 0$. Thus, $L_{+}$ has a zero eigenvalue corresponding to the strictly positive eigenfunction $\varphi'(x)$. Consequently, by Sturm-Liouville theory, $L_{+}$ has no negative eigenvalues. Theorem \ref{thm:eigval_count} then implies that the stability operator $\mathcal{L}$ has no real eigenvalues. Therefore, the Maxwell front is neutrally stable in the continuum limit.

\subsubsection{\texorpdfstring{$C \gg 1$ $(0 < h \ll 1)$}{h small}} \label{sec3.2.2}
When the coupling is large but finite ($C \gg 1$ or equivalently $0 < h \ll 1$), using the method of exponential asymptotics \cite{KING_CHAPMAN_2001}, it can be shown that the translational invariance of the continuum limit is broken, thus permitting only the onsite and intersite configurations of the Maxwell fronts.
\par
The analysis of the stationary solution in this case proceeds as follows. We express the discrete Laplacian term in powers of $h^{2j}$, as before, and consider the discrete stationary equation:
\begin{equation} \label{eqn:cont_stationary_equation_discrete}
    0 = -2\sum_{k \geq 1} \dfrac{h^{2k-2}}{(2k)!} \partial_{x}^{2k}\varphi - \mu_{M}\varphi - |\varphi|\varphi + |\varphi|^{2}\varphi.
\end{equation}
Expanding $\varphi(x)$ in an asymptotic power series in $h$ as
\begin{equation} \label{eqn:cont_asymp_expansion_varphi}
    \varphi(x) \sim \sum_{j \geq 0} h^{2j} \varphi_{j}(x) \quad (h \ll 1),
\end{equation}
gives the continuum one-front solution \eqref{eqn:cont_front_solution} at leading order:
\begin{equation} \label{eqn:cont_varphi0_leading_order}
    \varphi_{0}(x) = \frac{2}{3}\left(1 + e^{-\sqrt{2}(x-x_{0})/3}\right)^{-1}.
\end{equation}
Here, $x_{0} = h n_{0}$ is a constant that will be determined through the exponential asymptotic analysis. We will see that only two values of $n_{0}$ admit one-front solutions in this case. For brevity, we will write the equations in the subsequent calculations in the shifted variable $\tilde{x} = x - x_{0}$. Under this shifted variable, we have $\partial_{\tilde{x}} = \partial_{x}$.
\par
An important observation is that $\varphi_{0}(\tilde{x})$ possesses poles at $\zeta_{k} = i3\pi(1 + 2k)/\sqrt{2}$ $(k \in \mathbb{Z})$, where $\varphi_{0}(\tilde{x}) \sim \sqrt{2}(\tilde{x}-\zeta_{k})^{-1}$ as $\tilde{x} \to \zeta_{k}$. These singularities become more pronounced in the late-order terms, $\varphi_{j}$, of the expansion \eqref{eqn:cont_asymp_expansion_varphi}, in a factorial over power form, owing to the dominant contribution of the poles nearest to the real axis, $\zeta_{0}$ and $\zeta_{-1} = \overline{\zeta}_{0}$. Specifically, the expansion takes the form:
\begin{equation} \label{eqn:cont_late_order_varphij}
    \varphi_{j}(\tilde{x}) \sim \dfrac{(-1)^{j} \Gamma(2j+4) \Lambda_{1} G(\tilde{x})}{\chi^{2j+\beta}} \quad (j \gg 1),
\end{equation}
where $\chi = 2\pi(\tilde{x}-\zeta)$, $G(\tilde{x}) = \varphi_{0}'(\tilde{x}) \int_{\zeta}^{\tilde{x}} [\varphi_{0}'(s)]^{-2} \, ds$, and $\zeta$ is taken to be either of the dominant poles, $\zeta_{0}$ or $\zeta_{-1}$. The constant $\Lambda_{1}$ is determined by matching an inner solution in a region of order $O(h)$ near $\zeta$ with the outer solution of \eqref{eqn:cont_late_order_varphij} and is found to be approximately $\Lambda_{1} \approx -2533$.
    \par
The factorial-over-power divergence of $\varphi_{j}$ triggers the activation of an exponentially small remainder term due to the Stokes phenomenon \cite{dingle1973asymptotic}, which occurs across the Stokes line, where
\begin{equation}
    \text{Im}\ \chi = 0, \quad \text{Re}\ \chi \geq 0.
\end{equation}
In this case, the Stokes lines are given by $\text{arg}(\tilde{x}-\zeta_{0}) = -\pi/2$ and $\text{arg}(\tilde{x}-\zeta_{-1}) = \pi/2$. The analysis of the Stokes phenomenon can be carried out by considering the optimally truncated asymptotic expansion of $\varphi$:
\begin{equation} \label{eqn:cont_trunc_expansion}
    \varphi(\tilde{x}) = \sum_{j=0}^{N-1} h^{2j} \varphi_{j}(\tilde{x}) + R_{N}(\tilde{x}),
\end{equation}
where $N$ is chosen such that the approximation error of the series is minimized, and the remainder $R_{N}$ is exponentially small in $h$. Upon substituting \eqref{eqn:cont_trunc_expansion} into \eqref{eqn:cont_stationary_equation_discrete} and using the late-order divergence of $\varphi_{j}$ from \eqref{eqn:cont_late_order_varphij}, we obtain an equation for the remainder $R_{N}$. Solving the equation for $R_{N}$ while considering the behavior of $R_{N}$ in a neighborhood of the pole $\zeta$ such that $\text{arg}(\tilde{x}-\zeta) = O(\sqrt{h})$, it is found that the exponentially small remainder term $R_{N}$ is "switched on" as the Stokes line is crossed from left to right ($\text{Re}(\tilde{x}) < 0$ to $\text{Re}(\tilde{x}) > 0$). 

Taking into account the contributions from $\zeta_{0}$ and $\zeta_{-1}$, it is found that for real $\tilde{x} > 0$, the remainder behaves as
\begin{equation}
    R_{N}(\tilde{x}) \sim -2\pi \Lambda_{1} h^{-4} e^{-3\sqrt{2} \pi^2 / h} \left[ \text{Re}(G(\tilde{x})) \sin\left(\frac{2\pi \tilde{x}}{h}\right) - \text{Im}(G(\tilde{x})) \cos\left(\frac{2\pi \tilde{x}}{h}\right) \right], \quad (h \ll 1),
\end{equation}
which is "switched on" across the Stokes line.
\par
Taking this remainder term into consideration for the solution $\varphi$, we have
\begin{equation} \label{eqn:varphi_expansion_truncated_with_remainder}
    \varphi(\tilde{x}) \sim \varphi_{0}(\tilde{x}) + \sum_{j=1}^{N-1} h^{2j} \varphi_{j}(\tilde{x}) + R_{N}(\tilde{x}), \quad (h \ll 1, \tilde{x} > 0).
\end{equation}
We are now in a position to determine the shifting constant $x_{0} = h n_{0}$ that admits one-front solutions. For $\varphi(\tilde{x})$ to be a solution of the discrete equation \eqref{eqn:cont_stationary_equation_discrete}, the remainder $R_{N}(\tilde{x})$ must remain bounded. However, $R_{N}(\tilde{x})$ is not necessarily bounded for any $x_{0}$, since $G(\tilde{x})$ is real and grows unbounded as $\tilde{x} \to \infty$:
\begin{equation} \label{eqn:cont_G(x)_large_x}
    G(\tilde{x}) \sim \dfrac{27}{8} e^{\sqrt{2} \tilde{x}/3} \quad (\tilde{x} \to \infty).
\end{equation}
Thus, as $\tilde{x} \to \infty$, $R_{N}(\tilde{x})$ behaves asymptotically as
\begin{equation}
    R_{N}(\tilde{x}) \sim \dfrac{27\pi}{4} \Lambda_{1} h^{-4} e^{-3\sqrt{2} \pi^2 / h} \sin(2\pi n_{0}) e^{\sqrt{2} \tilde{x}/3}.
\end{equation}
It is now evident that for $R_{N}(\tilde{x})$ to remain bounded as $\tilde{x} \to \infty$, $n_{0}$ can only take two values (modulo integer shifts), namely $n_{0} = 0$ and $n_{0} = 1/2$, corresponding to the onsite and intersite stationary Maxwell fronts, respectively.
\par
Next, we turn to the spectrum of $L_{+}$ and $L_{-}$ in the $C \gg 1$ $(h \ll 1)$ case. Weyl's Essential Spectrum Lemma gives that the essential spectrum of $L_{+}$ and $L_{-}$ are also bounded below by $|\mu_{M}|$ and $0$. However, we shall see that in this case, $L_{+}$ has no zero eigenvalue. In fact, the zero eigenvalue of $L_{+}$ in the continuum limit $(h = 0)$ bifurcates as $0 < h \ll 1$, and the bifurcation depends on the configuration of Maxwell fronts. This bifurcation occurs because the translational invariance of the solution $\varphi(x)$ is broken. In the continuum limit $(h = 0)$, the translational invariance of the Maxwell front allows us to obtain $L_{+} \varphi'(x) = 0$ by differentiating $L_{-} \varphi(x) = 0$ with respect to $x$. In the $0 < h \ll 1$ case, we consider the stationary solution $\varphi(\tilde{x})$ and the fact that $L_{-} \varphi(\tilde{x}) = 0$ for $n_{0}$ taking either of the two values $0$ or $1/2$. Even though we obtain that $L_{+} \varphi' = 0$, notice that $\varphi' \notin \ell^{2}(\Z)$, since, due to the remainder term, as $\tilde{x} \to \infty$, $\varphi'(\tilde{x})$ has an exponentially growing tail as
\begin{equation}
    \begin{split}
        \varphi'(\tilde{x}) &= \varphi_{0}'(\tilde{x}) + R_{N}'(\tilde{x}) \\
        &\sim -\dfrac{27\pi^{2}}{2} \Lambda_{1} h^{-5} e^{-3\sqrt{2} \pi^2 / h} e^{\sqrt{2} \tilde{x}/3} \cos(2\pi n_{0}) \quad (\tilde{x} \to \infty),
    \end{split}
\end{equation}
for both the onsite and intersite Maxwell fronts. The fact that $\varphi'(\tilde{x})$ has an exponentially growing tail plays a crucial role in analyzing the perturbation of the zero eigenvalue of $L_{+}$ from the continuum limit. To distinguish between the operator $L_{+}$ in the continuum case and the large coupling case, we parameterize $L_{+}$ with $h$ as $L_{+} = L_{+}(h)$ and write $L_{+}(0)$ to denote the continuum operator, as in \eqref{eqn:cont_Lp}, and $L_{+}(h)$ with $0 < h \ll 1$ to denote the operator in the large coupling case.

    \par
To analyze the bifurcation of the zero eigenvalue of $L_{+}(h)$ from the continuum case $(h=0)$ to the large coupling case $(0 < h \ll 1)$, we consider the eigenvalue equation
\begin{equation} \label{eqn:cont_eigval_eqn_Lp}
    L_{+}(h) w = \alpha w.
\end{equation}
Since we are considering the bifurcation of the zero eigenvalue, which is near the essential spectrum of $L_{+}$ in the continuum limit, we expect that the bifurcated eigenvalue is small in magnitude, i.e., $|\alpha| \ll 1$. Thus, we write $w$ in an asymptotic expansion in $\alpha$ as
\begin{equation}
    w = w_{0} + \alpha w_{1} + \dots \quad (\alpha \ll 1).
\end{equation}
At leading order, this gives the equation
\begin{equation}
    L_{+}(h) w_{0} = 0,
\end{equation}
which is satisfied by $w_{0} = \varphi'(\tilde{x})$. As discussed above, $w_{0}$ has a growing tail as $\tilde{x} \to \infty$, given by
\begin{equation}
    w_{0} \sim -\dfrac{27\pi^{2}}{2} \Lambda_{1} h^{-5} e^{-3\sqrt{2} \pi^{2} / h} e^{\sqrt{2} \tilde{x} / 3} \cos(2\pi n_{0}).
\end{equation}
This growing tail will be counterbalanced by the solution at order $O(\alpha)$. At $O(\alpha)$, we have the equation
\begin{equation}
    L_{+}(h) w_{1} = w_{0}.
\end{equation}
Since as $h \to 0$, $\varphi(\tilde{x}) \to \varphi_{0}(\tilde{x})$, we have asymptotically
\begin{equation}
    L_{+}(0) w_{1} \sim \varphi_{0}'(\tilde{x}).
\end{equation}
Solving this equation yields $w_{1} = \varphi_{0}'(\tilde{x}) \nu(\tilde{x})$, where $\nu(\tilde{x})$ satisfies
\begin{equation}
    \nu'(\tilde{x}) = -\frac{1}{[\varphi_{0}'(\tilde{x})]^{2}} \int_{-\infty}^{\tilde{x}} [\varphi_{0}'(s)]^{2} \, ds.
\end{equation}
As $\tilde{x} \to \infty$, $\nu(\tilde{x})$ behaves asymptotically as
\begin{equation}
    \nu'(\tilde{x}) \sim -\frac{1}{2\sqrt{2}} e^{2\sqrt{2} \tilde{x} / 3},
\end{equation}
thus, $w_{1}$ exhibits a growing tail as
\begin{equation}
    w_{1} \sim -\frac{1}{6\sqrt{2}} e^{\sqrt{2} \tilde{x} / 3} \quad (\tilde{x} \to \infty).
\end{equation}
Taking into account contributions up to $O(\alpha)$, we have as $\tilde{x} \to \infty$,
\begin{equation}
    w \sim w_{0} + \alpha w_{1} \sim \left[-\dfrac{27\pi^{2}}{2} \Lambda_{1} h^{-5} e^{-3\sqrt{2} \pi^{2} / h} \cos(2\pi n_{0}) - \dfrac{\alpha}{6\sqrt{2}} \right] e^{\sqrt{2} \tilde{x} / 3}.
\end{equation}
Eliminating the growing tail yields the asymptotic formula for $\alpha$ as
\begin{equation}
    \alpha \sim -81\sqrt{2}\pi^{2} h^{-5} e^{-3\sqrt{2} \pi^{2} / h} |\Lambda_{1}| \cos(2\pi n_{0}) \quad (0 < h \ll 1).
\end{equation}
We see that 
the zero eigenvalue of $L_{+}(0)$ bifurcates to an exponentially small (in $h$) eigenvalue $\alpha$, and this bifurcation depends on the configuration of the Maxwell front. For the onsite Maxwell front $(n_{0} = 0)$, the zero eigenvalue bifurcates to a negative eigenvalue $(\alpha < 0)$, whereas for the intersite Maxwell front $(n_{0} = 1/2)$, it bifurcates to a positive eigenvalue $(\alpha > 0)$. Moreover, in the case of the intersite Maxwell front, due to the positivity of the eigenfunction $w$, it follows from discrete Sturm theory that $L_{+}$ has no eigenvalue smaller than $\alpha$. Therefore, we conclude that $L_{+}(h)$ has no negative eigenvalue in the case of the intersite Maxwell front.
\par
Now, the assumptions of Theorem \ref{thm:eigval_count} are satisfied, since $L_{\pm}$ again have empty kernels in $\ell^{2}(\Z)$ (due to the growing tail of $\varphi'(\tilde{x})$ and the fact that $\varphi(\tilde{x}) \notin \ell^{2}(\Z)$). Also, by the positivity of $\varphi$, it does not change signs on $\Z$, so we may apply Theorem \ref{thm:eigval_count}. We deduce that in the large coupling limit, the stability of the Maxwell fronts is the same as in the small coupling limit, i.e., the onsite Maxwell front is unstable due to the presence of the negative eigenvalue of $L_{+}$, and the intersite Maxwell front is neutrally stable.
\par
It is interesting to note the contrast between the stabilities of the dark solitons of the defocusing cubic DNLS (with $F(|\phi_{n}|) = |\phi_{n}|^{2}$ in \eqref{eqn:general_dnls}) and those of the quadratic-cubic Maxwell fronts. In the former case, the intersite configuration is always unstable for $C > 0$ due to a pair of real eigenvalues \cite{adriano2025exponential}, whereas the analysis above shows that in the latter, the intersite quadratic-cubic Maxwell front is always neutrally stable for $C > 0$. The onsite configurations of the front solutions also exhibit different stability characteristics. The onsite DNLS dark soliton is initially stable near the anticontinuum limit, up until a certain critical value of $C = C_{\text{cr}} \approx 0.01547$, where a Hamiltonian-Hopf bifurcation occurs due to a collision of an imaginary eigenvalue and an eigenvalue bifurcating from the band edge of the continuous spectrum, leading to oscillatory instability (instability due to a complex eigenvalue $\lambda$ in the first quadrant) \cite{PhysRevE.75.066608,PhysRevLett.82.85}. This instability persists as $C$ increases up to the continuum limit. The onsite quadratic-cubic Maxwell front, however, is always unstable for $C > 0$, due to a pair of real eigenvalues.

\subsection{Numerical Results}\label{sec3.3}
In this section, we conduct a numerical study on the stationary solutions of \eqref{eqn:quad_cubic_dnls} and their spectra. This is done by solving the stationary equation \eqref{eqn:quad_cubic_dnls_stationary} and the associated eigenvalue problem \eqref{eqn:eigval_eqn} on a truncated domain of size $N_{x}$. At the boundaries, we impose the homogeneous Neumann boundary condition on $\varphi_{n}$. For most of the computations, the size of the domain is chosen as $N_{x} = 1001$. The only exception is for the computation of the onsite Maxwell fronts for $C \geq 0.4$, where a domain of size $N_{x} = 30001$ is used. The computations are done in double precision. 

The fronts are computed initially at the anticontinuum limit $(C = 0)$ and as $C$ varies, the solutions are computed using numerical continuation. The results are illustrated in a bifurcation diagram where the coupling parameter $C$ is plotted against the squared $\ell^{2}(\Z)$ norm of the first difference of the front $\bm{\varphi}$, that is
\begin{equation}\label{eqn:N}
    \mathcal{N} = \sum_{n=-\infty}^{\infty}|\varphi_{n}-\varphi_{n-1}|^{2}.
\end{equation}

Figure \ref{fig:2-3_bifurcation_diagram_mu=maxwell} shows the bifurcation diagrams of Maxwell fronts with code lengths $N \leq 3$. It can be observed that as $C$ grows, the onsite $(0)$ and intersite $(m)$ branches becomes closer as their profiles converge to the continuum Maxwell front \eqref{eqn:cont_varphi0_leading_order}. However, they are still separated from each other. On the other hand, branches from fronts with codes of length $2$ and $3$ undergo fold bifurcations and connect with other branches of fronts (also of lengths $\geq 1$). In the case of codes of length 2 (the left panel of Fig. \ref{fig:2-3_bifurcation_diagram_mu=maxwell}), the branch $(mm)$ connects with the branch $(ul)$, along which the front's stability changes from unstable to stable. The branches $(ml)$ and $(um)$ are also connected, and they seem to appear as a single branch in the bifurcation diagram \ref{fig:2-3_bifurcation_diagram_mu=maxwell}. Along this branch, a change of stability also occurs, albeit within at a very small range of values of $\mathcal{N}$, as shown in the inset of Fig. \ref{subfig:2-3_bifurcation_length2}. For codes of length $3$ (the right panel of Fig. \ref{fig:2-3_bifurcation_diagram_mu=maxwell}), the picture becomes more complicated, as the bifurcation diagram shows intersections and overlaps between various branches. We note that these intersections and overlaps do not necessarily imply any bifurcation between these branches. This occurs mainly due to the fact that the bifurcation diagram only partially captures the full picture of the bifurcation of the fronts by projecting it onto the $\mathcal{N}-C$ plane. Regardless, we see that the codes of length $3$ undergo similar fold bifurcations as the length $2$ codes and they do not persist for values of $C$ greater than $0.05$. We note that as predicted through the analytical results, front configurations containing the character '$m$' in their codes are unstable as they bifurcate from $C = 0$, whereas those consisting solely of '$u$' and '$l$' characters are stable. Examples of the profile and spectrum of an unstable and stable fronts of length $\geq 2$ are illustrated in Fig. \ref{fig:2-3_fronts_length_23}. The unstable front branching from the code $(mmm)$ has 3 pairs of unstable eigenvalues, in agreement with the analytical prediction \eqref{eqn:number of unstable eigenvalues}. 
\begin{figure}[tbhp]
    \centering
    \subfloat[Codes of length $0,1$, and $2$]{\includegraphics[scale=0.335]{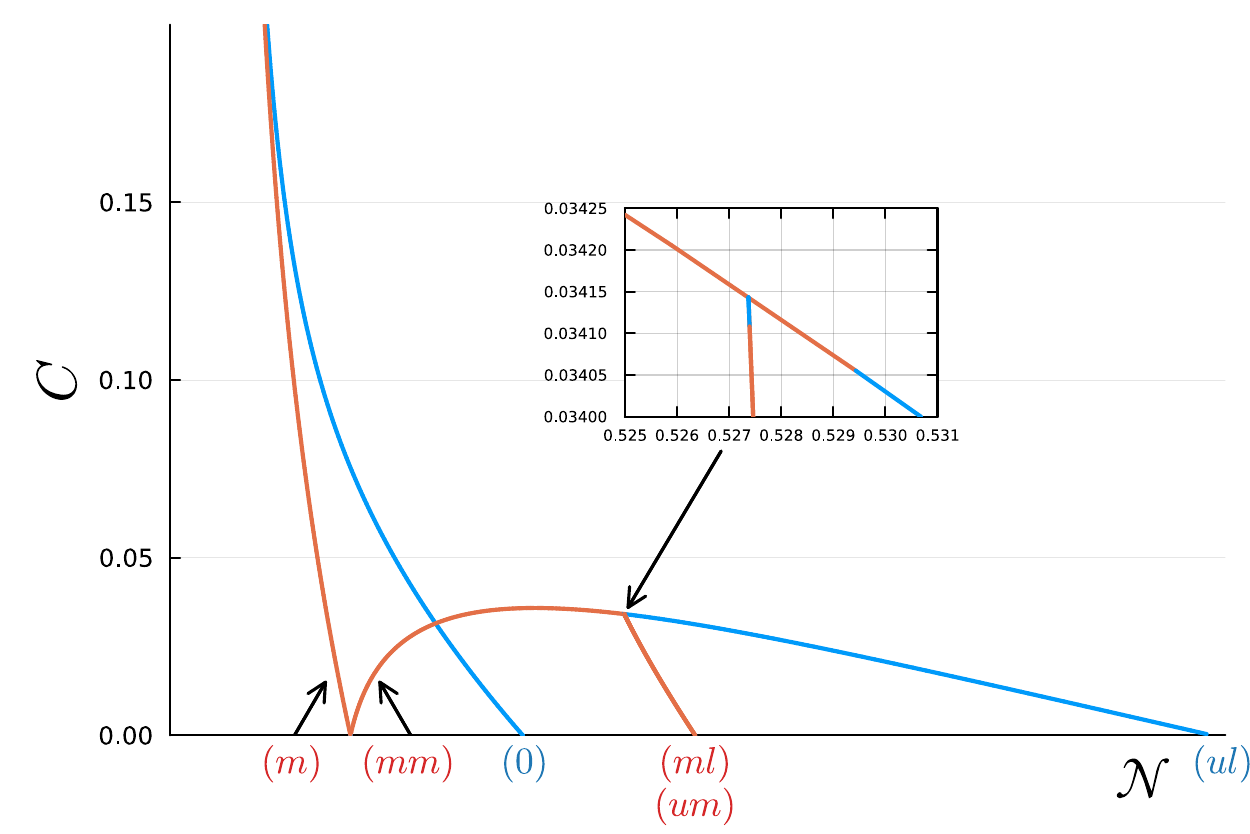}\label{subfig:2-3_bifurcation_length2}}
    \subfloat[Codes of length $0,1$, and $3$]{\includegraphics[scale=0.335]{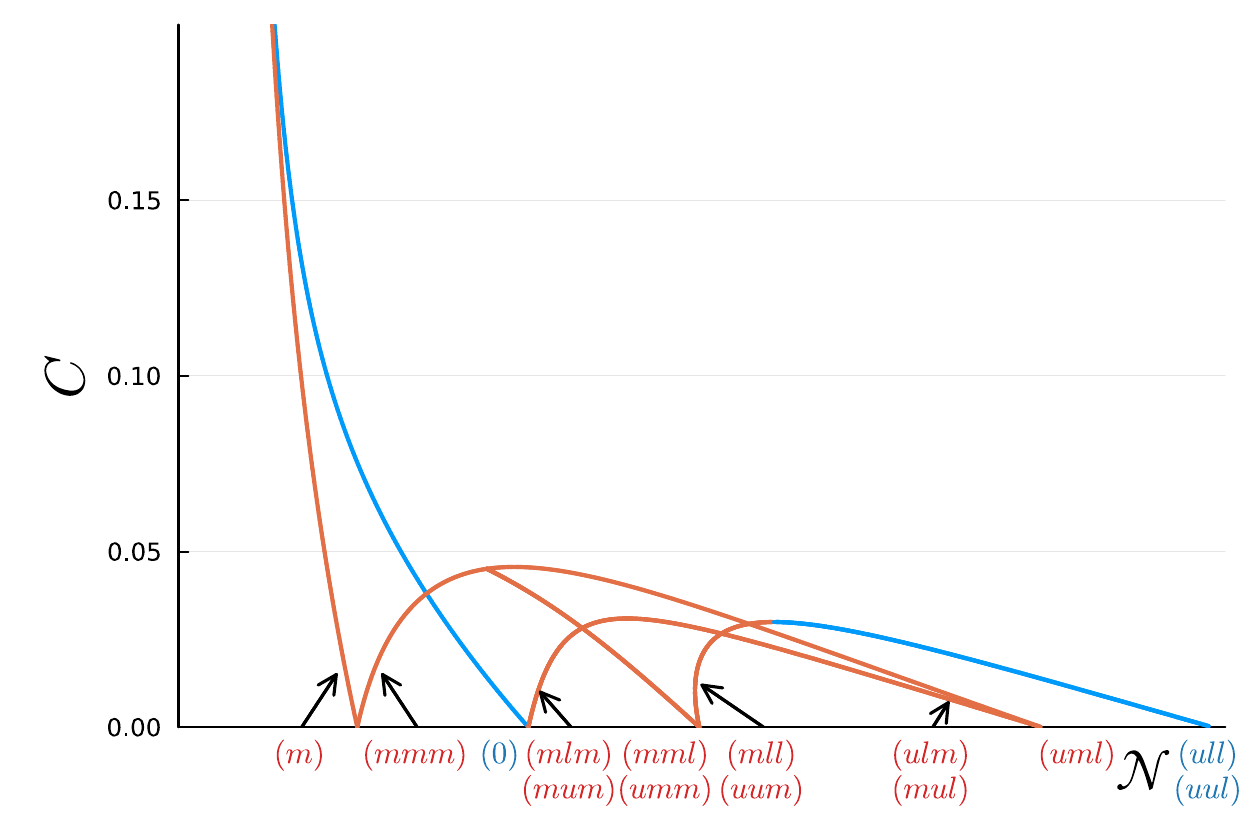}\label{subfig:2-3_bifurcation_length3}}
    \caption{Bifurcation diagrams of quadratic-cubic Maxwell fronts with codes of length $0,1,2$, and $3$. Blue lines represent stable fronts and red lines represent unstable fronts.}
    \label{fig:2-3_bifurcation_diagram_mu=maxwell}
\end{figure}

\begin{figure}[tbhp]
     \centering
     \begin{subfigure}{1\textwidth}
         \includegraphics[width = \textwidth]{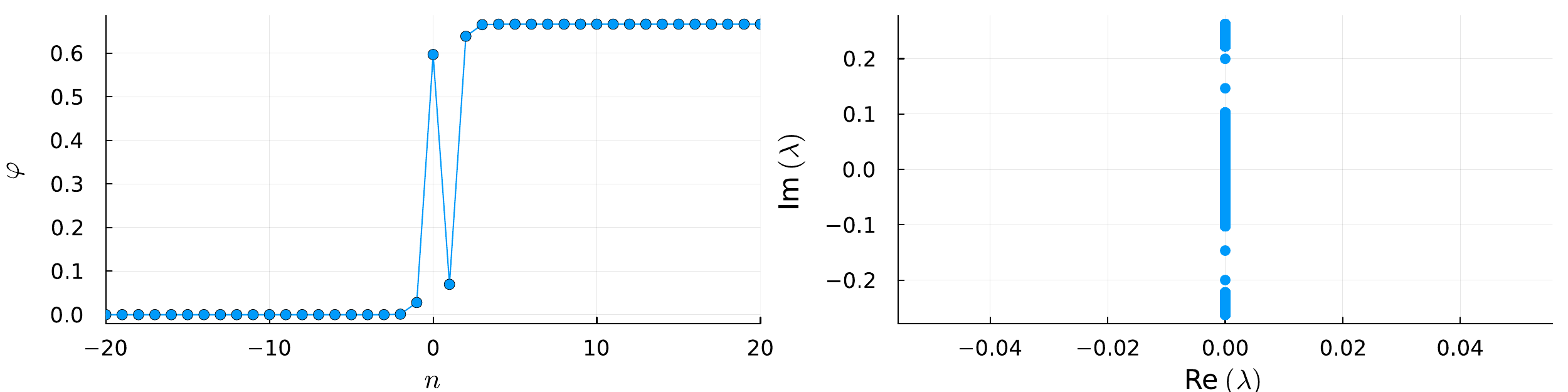}
         \caption{$(ul)$ Maxwell front}
     \end{subfigure}
     \begin{subfigure}{1\textwidth}
         \includegraphics[width = \textwidth]{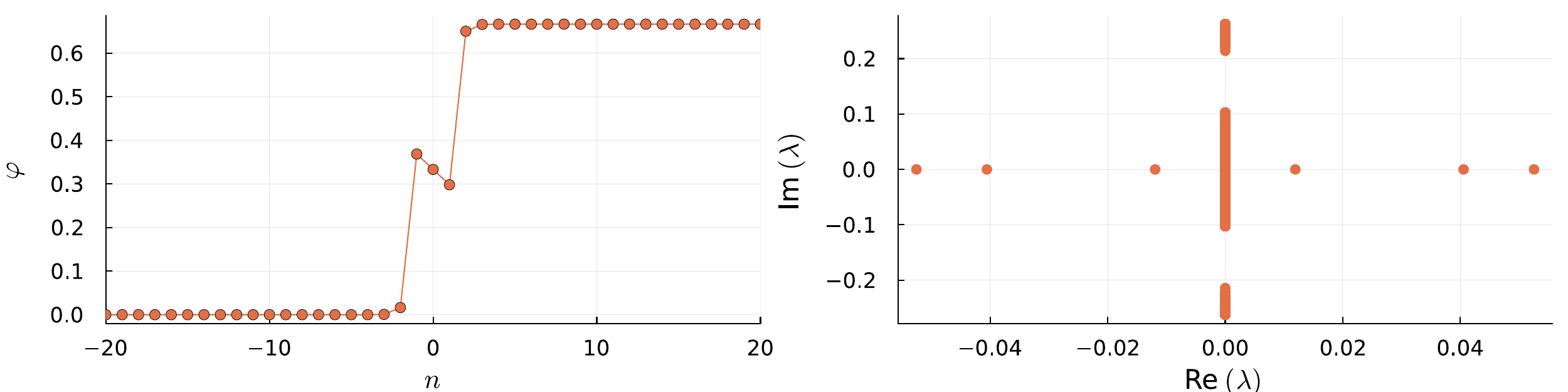}
         \caption{$(mmm)$ Maxwell front}
     \end{subfigure}
     \caption{Profiles (left) and spectrum (right) of quadratic-cubic Maxwell fronts of length $2$ and $3$ for \(C = 0.02\).
     }
     \label{fig:2-3_fronts_length_23}
 \end{figure}
The numerical results for codes of length $N \leq 3$ indicate that the only Maxwell fronts which persist for all values of $C$ are the two simplest configurations of length $0$ and $1$, they are the intersite and onsite configurations respectively. Consequently, fronts of length $\geq 2$ such as illustrated in Fig. \ref{fig:2-3_fronts_length_23} have no continuum counterpart. The stability of the intersite and onsite fronts also remains unchanged as $C$ increases. Figure \ref{fig:2-3_maxwell_fronts_eigval_C=0_to_1.1.pdf} shows the numerically computed eigenvalue with the largest real part of the linearized operator $\mathcal{L}$ for $0 \leq C \leq 1.1$ for the onsite and intersite Maxwell fronts. The eigenvalue of the intersite front is of order $10^{-8}$ for all values of $C$, and thus is numerically zero. On the other hand, the onsite front is unstable for $C > 0$. The real part of the unstable eigenvalue increases (as predicted by the asymptotics \eqref{eqn:ac_lambda_approx}) until $C \approx 0.043$, after which, it decreases in an exponential-like manner in $C$, as shown in the inset of Fig. \ref{subfig:2-3_onsite_eigval_C=0_to_1.1}. As $C \geq 0.56$, the eigenvalue is smaller than $10^{-6}$ and the numerics fail to capture the eigenvalue at this precision. This is due to the limitation of memory and computational power.
\begin{figure}
    \centering
    \subfloat[Eigenvalue of quadratic-cubic onsite Maxwell front]{\includegraphics[scale=0.335]{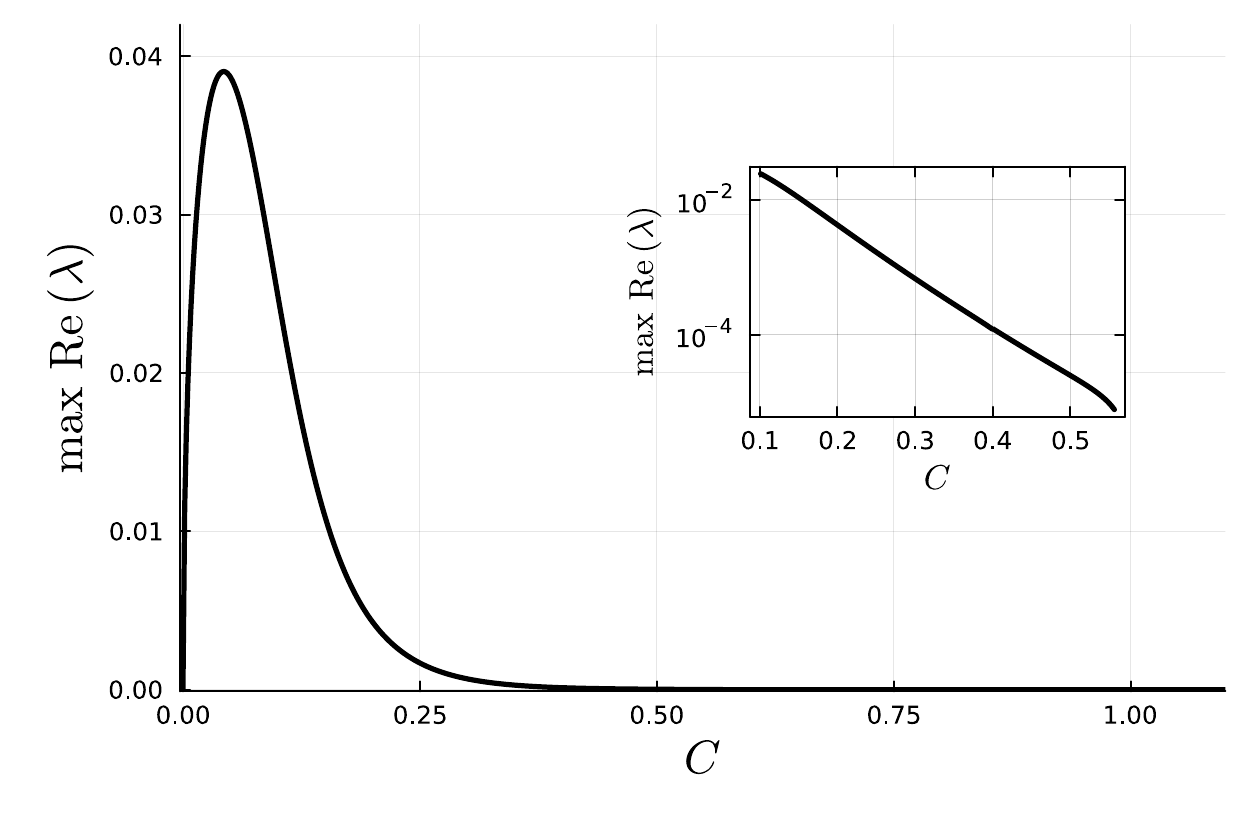}\label{subfig:2-3_onsite_eigval_C=0_to_1.1}}
    \subfloat[Eigenvalue of quadratic-cubic intersite Maxwell front]{\includegraphics[scale=0.335]{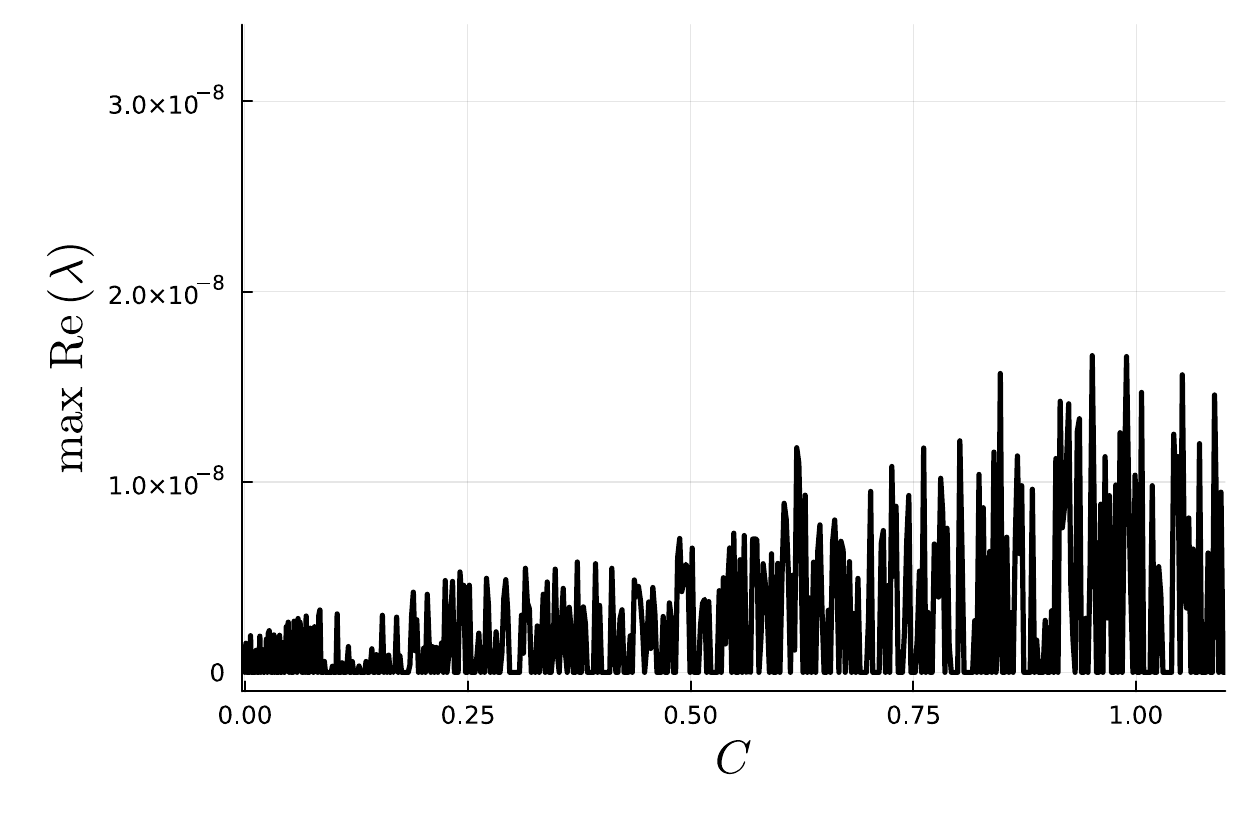}\label{subfig:2-3_intersite_eigval_C=0_to_1.1}}
    \caption{Eigenvalue of the largest real part of onsite and intersite quadratic-cubic Maxwell fronts for $C \in [0,1.1]$. The inset on the left panel shows a zoomed in view of the onsite eigenvalue for $C \in [0.4,0.56]$, the vertical axis of the inset is on a logarithmic scale.}
    \label{fig:2-3_maxwell_fronts_eigval_C=0_to_1.1.pdf}
\end{figure}
We also show the profiles of the onsite and intersite Maxwell front for $C = 0.1$ and $C = 0.5$ in Figure \ref{fig:2-3_front_prof_spec_C=0.1_and_0.5}. The instability of the onsite front is due to a single pair of nonzero real eigenvalues, which agrees with the asymptotic analysis for small and large $C$.
\begin{figure}[tbhp]
    \centering
    \begin{subfigure}{1\textwidth}
         \includegraphics[width = \textwidth]{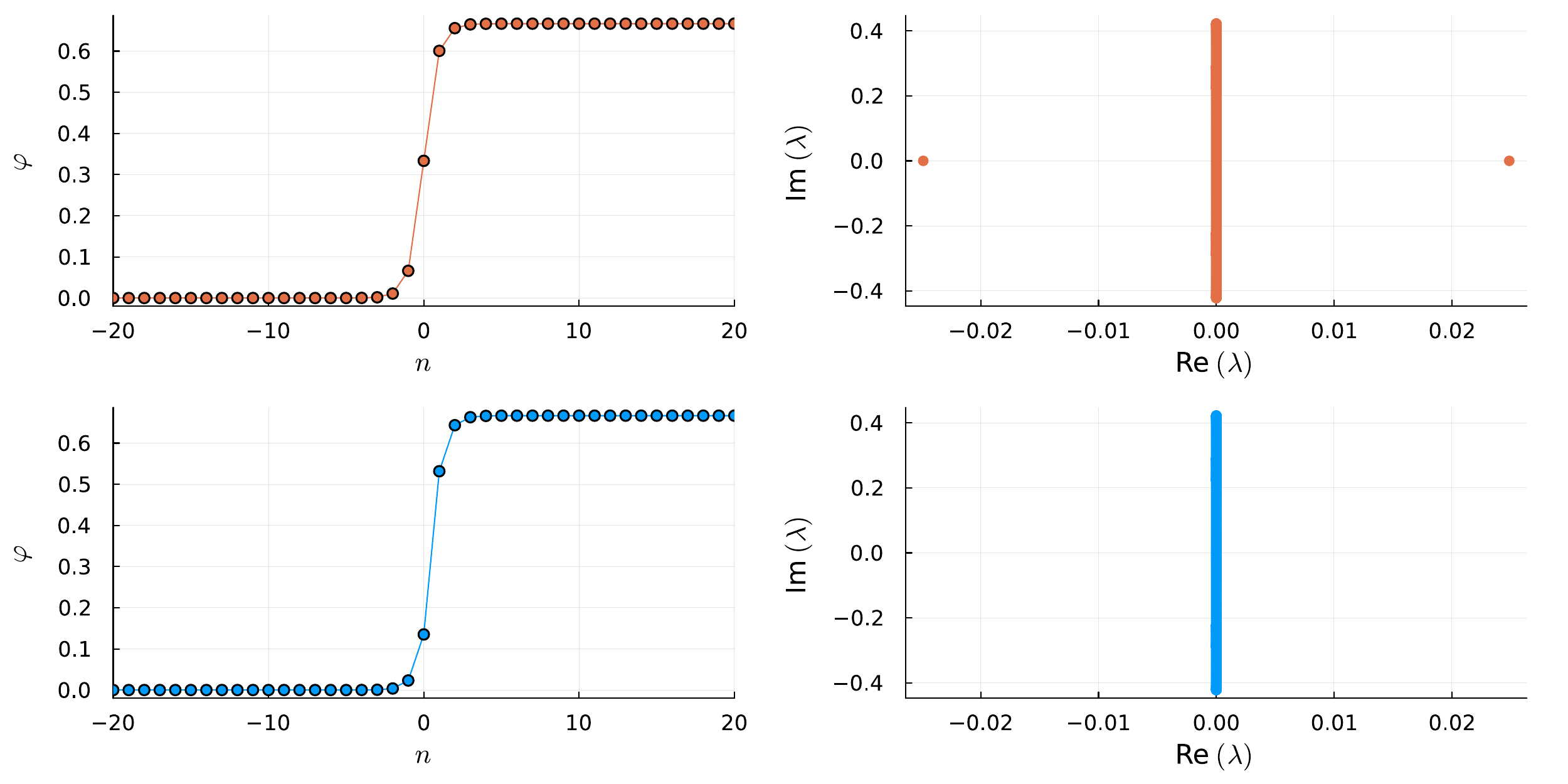}
         \caption{$C = 0.1$}
     \end{subfigure}
     \begin{subfigure}{1\textwidth}
         \includegraphics[width = \textwidth]{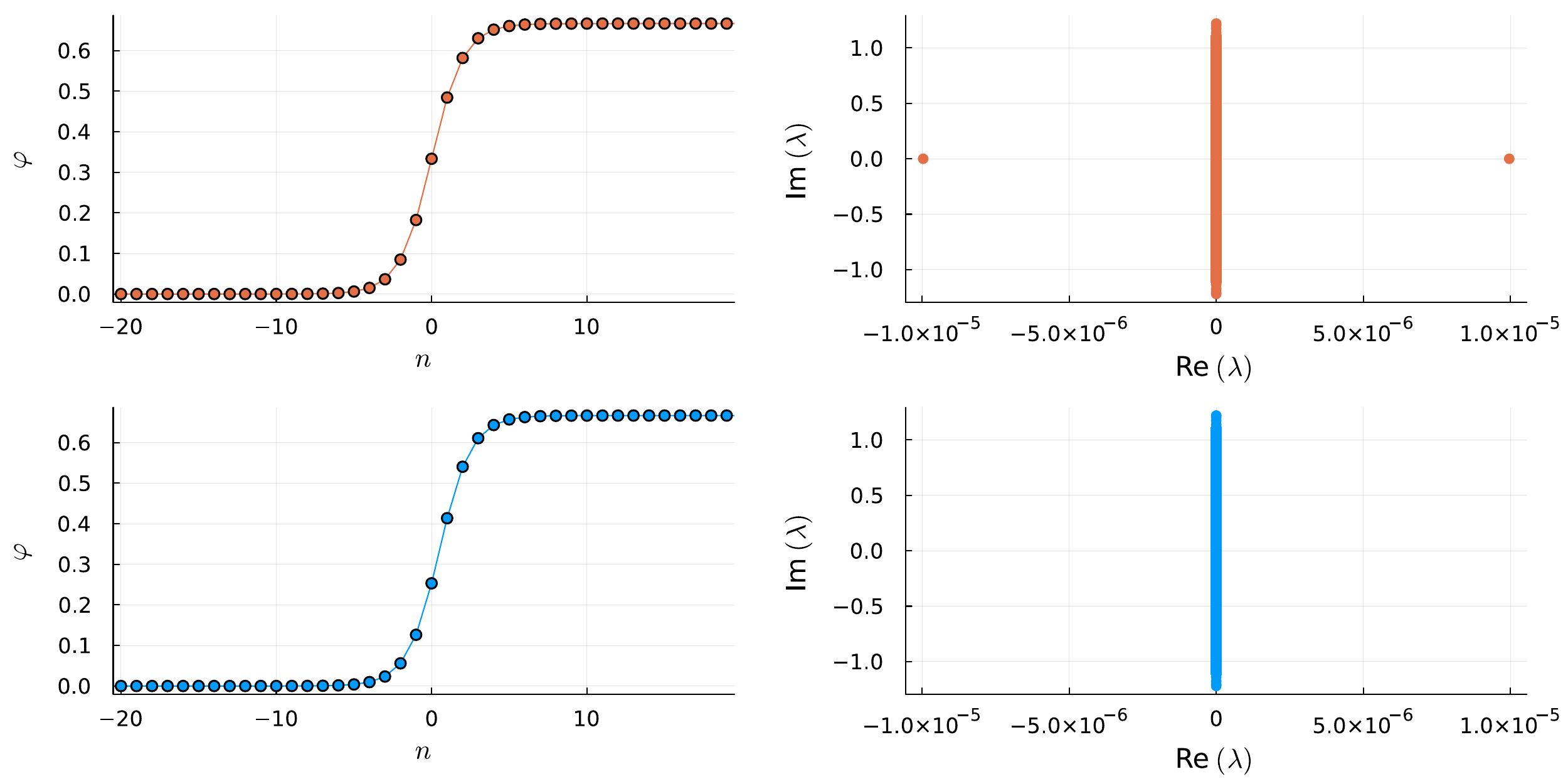}
         \caption{$C= 0.5$}
     \end{subfigure}
     \caption{Profiles (left) and spectrum (right) of the quadratic-cubic Maxwell fronts for \(C = 0.1\) and \(C = 0.5\). The orange color represents the onsite front and the blue color represents the intersite front. The presence of a spectrum with nonzero real part indicates instability of the onsite front. 
     }
    \label{fig:2-3_front_prof_spec_C=0.1_and_0.5}
\end{figure}

The time evolution of a slightly perturbed unstable onsite Maxwell front is shown in Figure \ref{fig:2-3_time_dynamics_onsite_C=0.02}. The small perturbation is chosen in the direction of the unstable eigenvector, i.e. the eigenvector corresponding to the unstable eigenvalue. Initially, the central state $\phi_{0}$ grows in magnitude. This is followed by radiation propagating towards the upper uniform state, while the lower uniform state remains mostly unaffected.

\begin{figure}[tbhp]
    \centering
    \includegraphics[width=0.6\linewidth]{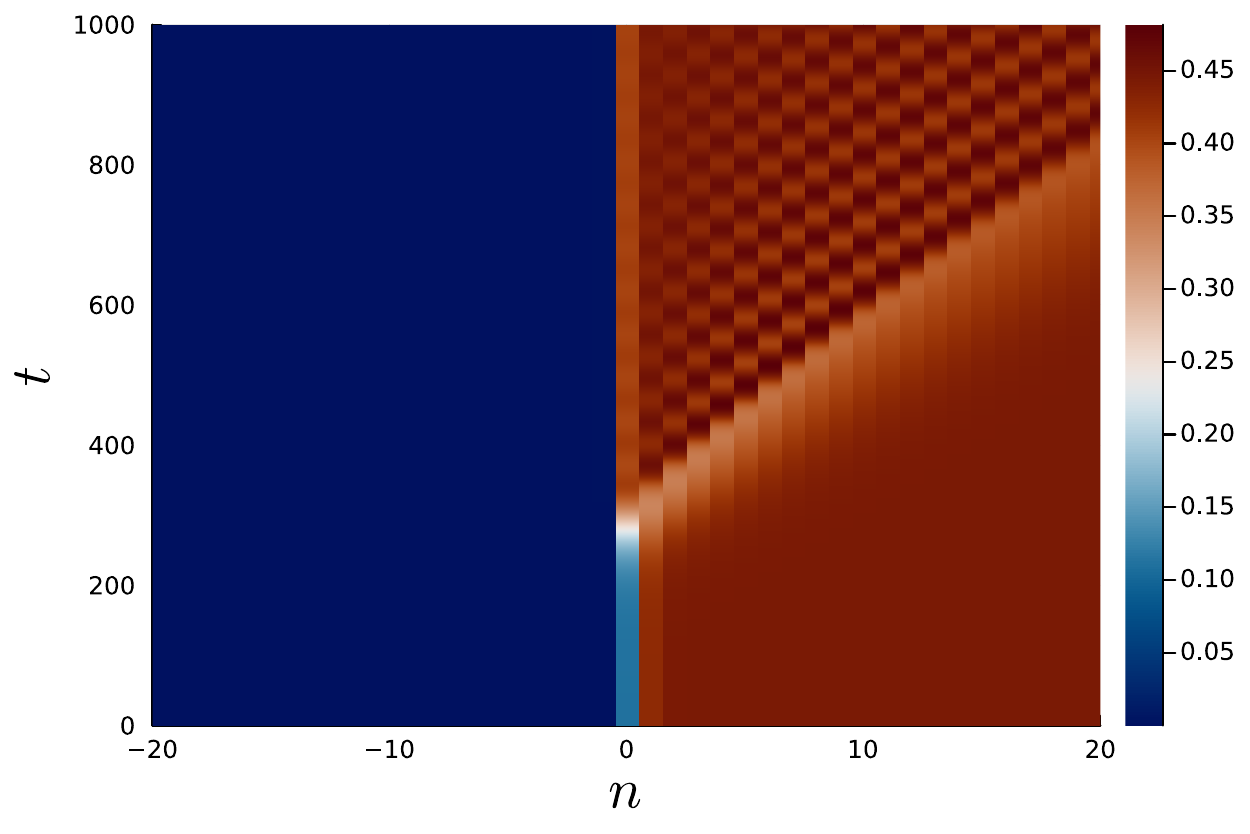}
    \caption{Time evolution of $|\phi_{n}|$ of a slightly perturbed onsite quadratic-cubic Maxwell front for $C = 0.02$.}
    \label{fig:2-3_time_dynamics_onsite_C=0.02}
\end{figure}

We also compute the bifurcation diagram of the onsite and intersite fronts existing not at the Maxwell point, $\mu \neq \mu_{M}$. These are illustrated in Figure \ref{fig:2-3_bifurcation_vary_mu}. When $\mu \neq \mu_{M}$, the branches of the onsite and intersite fronts are connected through a fold bifurcation at a certain value of $C$. As the value of $\mu$ approaches the Maxwell point, the fold bifurcation point increases and approaches infinity at the Maxwell point, at which the onsite and intersite branches are separated. Complementing analysis on the existence and stability of various discrete bright solitons (i.e., where $\varphi_n\to0$ as $n\to\pm\infty$) in competing nonlinearities has been presented recently in \cite{alfimov2025intrinsic,alfimov2025stability}.

\begin{figure}[tbhp]
    \centering
    \includegraphics[width=0.6\linewidth]{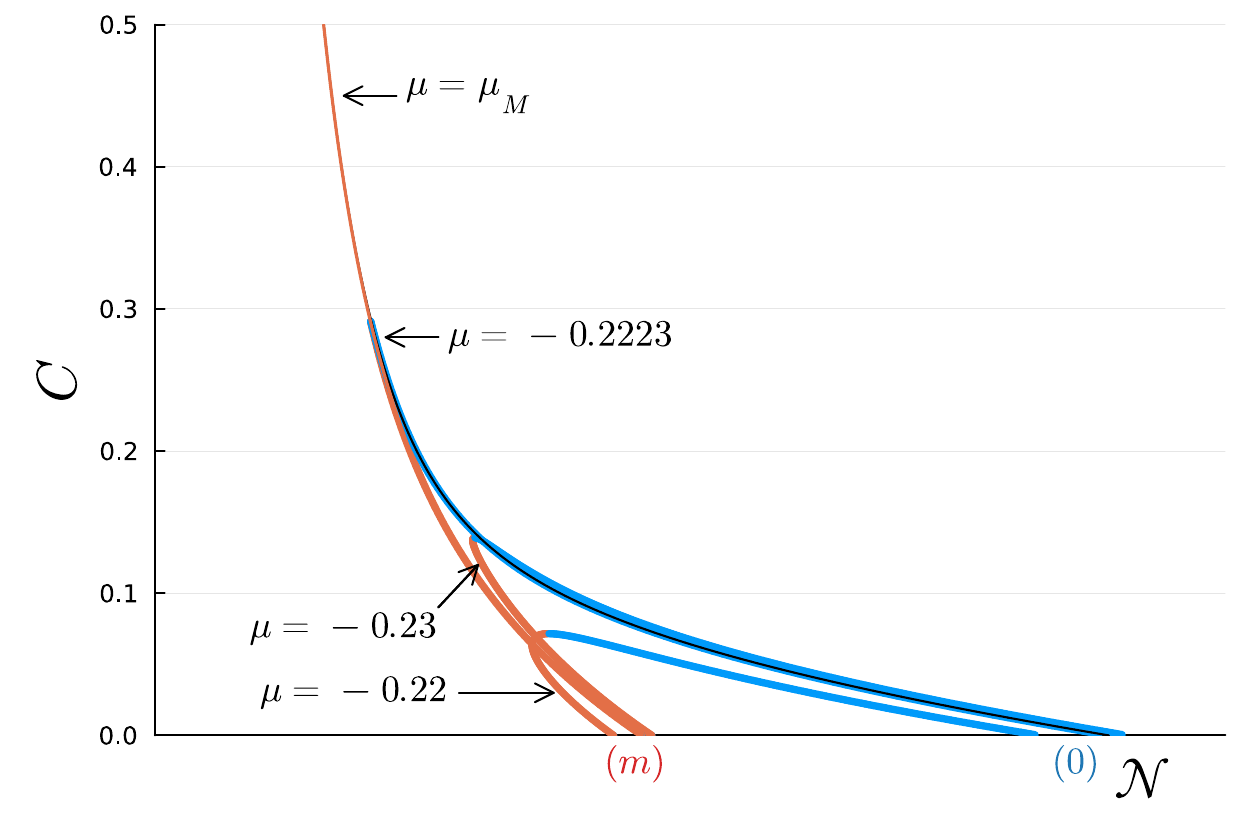}
    \caption{Bifurcation diagram of onsite and intersite fronts for various values of $\mu$. The thick red and blue lines represent unstable and stable fronts respectively. The thin black line represents the branch of the onsite Maxwell front, and the thin red line represents the intersite Maxwell front.}
    \label{fig:2-3_bifurcation_vary_mu}
\end{figure}

\section{Cubic-Quintic Nonlinearity} \label{sec4}
We now consider the cubic-quintic discrete nonlinear Schrödinger equation,
\begin{equation}\label{eqn:3-5_dnls}
    i \dot{\phi}_{n} = -\dfrac{C}{2}\Delta\phi_{n} - \mu \phi_{n} - |\phi_{n}|^{2}\phi_{n} + |\phi_{n}|^{4}\phi_{n},
\end{equation}
which also demonstrates multistability between the uniform solutions
\begin{equation}\label{eqn:3-5_uniform_solutions}
    \varphi^{\mathrm{low}} = 0,
    \quad \varphi^{\mathrm{mid}} = \left(\dfrac{1-\sqrt{1+4\mu}}{2}\right)^{1/2},
    \quad \varphi^{\mathrm{up}} = \left(\dfrac{1+\sqrt{1+4\mu}}{2}\right)^{1/2}.
\end{equation}
As in the quadratic-cubic case, the lower and upper uniform solutions, $\varphi^{\mathrm{low}}$ and $\varphi^{\mathrm{up}}$, are always stable, whereas the middle uniform solution $\varphi^{\mathrm{mid}}$ is always unstable.
The corresponding Hamiltonian is
\begin{equation}
    E(\phi) = \sum_{n=-\infty}^{\infty}\left[\dfrac{C}{2}|\phi_{n}-\phi_{n-1}|^{2} + \dfrac{\mu}{2}|\phi_{n}|^{2} + \dfrac{1}{4}|\phi_{n}|^{4} - \dfrac{1}{6}|\phi_{n}|^{6}\right].
\end{equation}
The linearized operators $L_{\pm}$ are
\begin{align}
    L_{-} &= -\dfrac{C}{2}\Delta - \mu - |\varphi_{n}|^{2} + |\varphi_{n}|^{4}, \label{eqn:3-5_Lm} \\
    L_{+} &= -\dfrac{C}{2}\Delta -\mu - 3|\varphi_{n}| + 5|\varphi_{n}|^{4}. \label{eqn:3-5_Lp}
\end{align}
In this case, the Maxwell point is $\mu_{M} = -3/16$, where $\varphi^{\mathrm{low}} = 0$ and $\varphi^{\mathrm{up}} = \sqrt{3}/2$ are energetically identical. Thus, Maxwell fronts connecting these two uniform solutions exist up to the continuum limit.
Having a similar structure to the quadratic-cubic nonlinearity, where the front connects a zero uniform solution to a positive uniform solution, a similar analysis applies to the onsite and intersite Maxwell fronts in the cubic-quintic nonlinearity case analogously. The profile and spectrum of such fronts are presented in Figure \ref{fig:3_5_front_profile_spectrum}.

\begin{figure}[tbhp]
     \centering
     \begin{subfigure}{1\textwidth}
         \includegraphics[width = \textwidth]{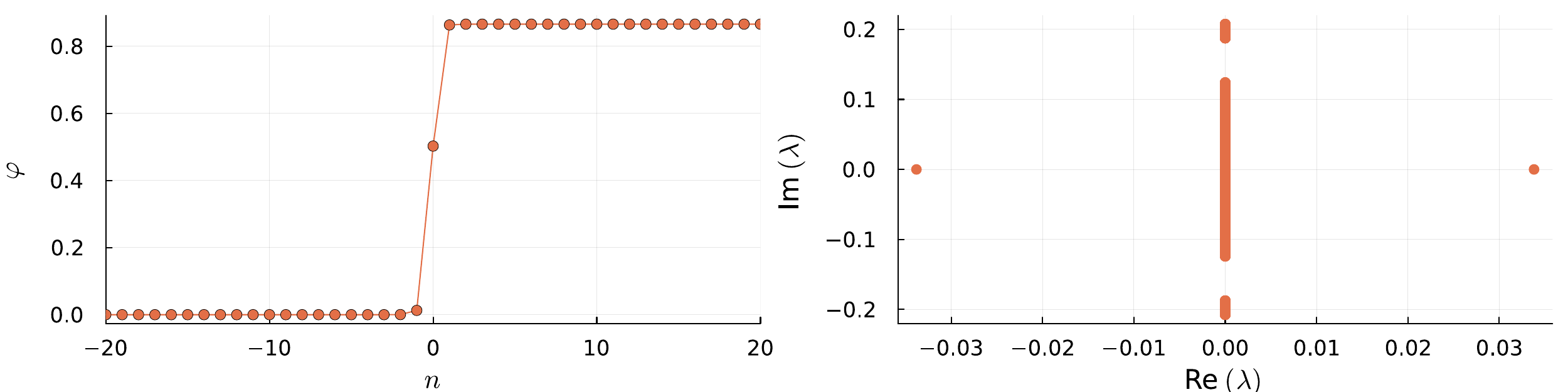}
         \caption{Onsite Maxwell front}
     \end{subfigure}
     \begin{subfigure}{1\textwidth}
         \includegraphics[width = \textwidth]{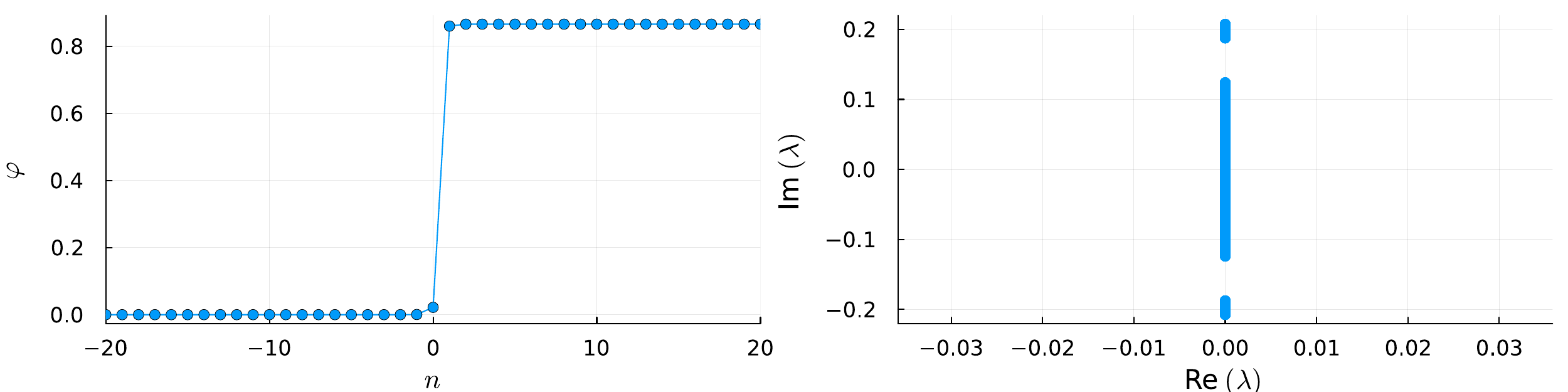}
         \caption{Intersite Maxwell front}
     \end{subfigure}
     \caption{Profiles (left) and spectrum (right) of the cubic-quintic Maxwell fronts for \(C = 0.01\). The presence of a spectrum with nonzero real part indicates instability of the onsite front. 
     }
     \label{fig:3_5_front_profile_spectrum}
 \end{figure}

In the anticontinuum limit, the spectrum of the corresponding linearized operator $\mathcal{L}$ consists of $\lambda = 0$ for the nonzero sites and $\lambda = \pm i |\mu_{M}|$ for the zero sites. In the case of onsite Maxwell fronts, the operator $L_{+}$ also possesses a negative eigenvalue at the anticontinuum limit, which persists under small perturbations in $C$, i.e. when $0 < C \ll 1$. Thus, Theorem \ref{thm:eigval_count} predicts the instability of the onsite Maxwell front due to a pair of real eigenvalues of $\mathcal{L}$. In contrast, the intersite Maxwell front is stable because the operator $L_{+}$ corresponding to the intersite configuration is a positive multiplicative operator when $C = 0$. Hence, no negative eigenvalue of $L_{+}$ exists for small $C$, guaranteeing its stability for small $C$. We may also obtain an approximation of the real eigenvalues of $\mathcal{L}$ for the onsite configuration by using a similar decaying ansatz as in the quadratic-cubic case. The approximation up to leading order in $C$ is
\begin{equation} \label{eqn:3-5_ac_lambda_approx}
    \lambda^{2} = -\frac{9  {\left(\sqrt{3} - 2 \cdot 3^{\frac{1}{4}} + 1\right)}}{8 \, {\left(\sqrt{3} {\left(3^{\frac{3}{4}} - 3\right)} - 3 \cdot 3^{\frac{3}{4}} + 9\right)}}C + O(C^{2}).
\end{equation}
Similarly, we may obtain the one-site approximation for the eigenvalues as
\begin{equation} \label{eqn:3-5 ac lambda approx one site}
    \lambda^{2} = \frac{\sqrt{3}}{8}C + O(C^{2}).
\end{equation}
These approximations are compared with the numerically calculated eigenvalues in Figure \ref{fig:3-5_ac_eigval_compare}, showing a good agreement with the numerically computed eigenvalues. As in the quadratic-cubic case, both the one-site approximation \eqref{eqn:3-5 ac lambda approx one site} and the asymptotic approximation from the decaying ansatz \eqref{eqn:3-5_ac_lambda_approx} agree with the numerics asymptotically as $C \to 0$. However, the approximation from the decaying ansatz \eqref{eqn:3-5_ac_lambda_approx} is more accurate for slightly larger values of $C$.

\begin{figure}[htbp]
    \centering
    \includegraphics[width=0.6\linewidth]{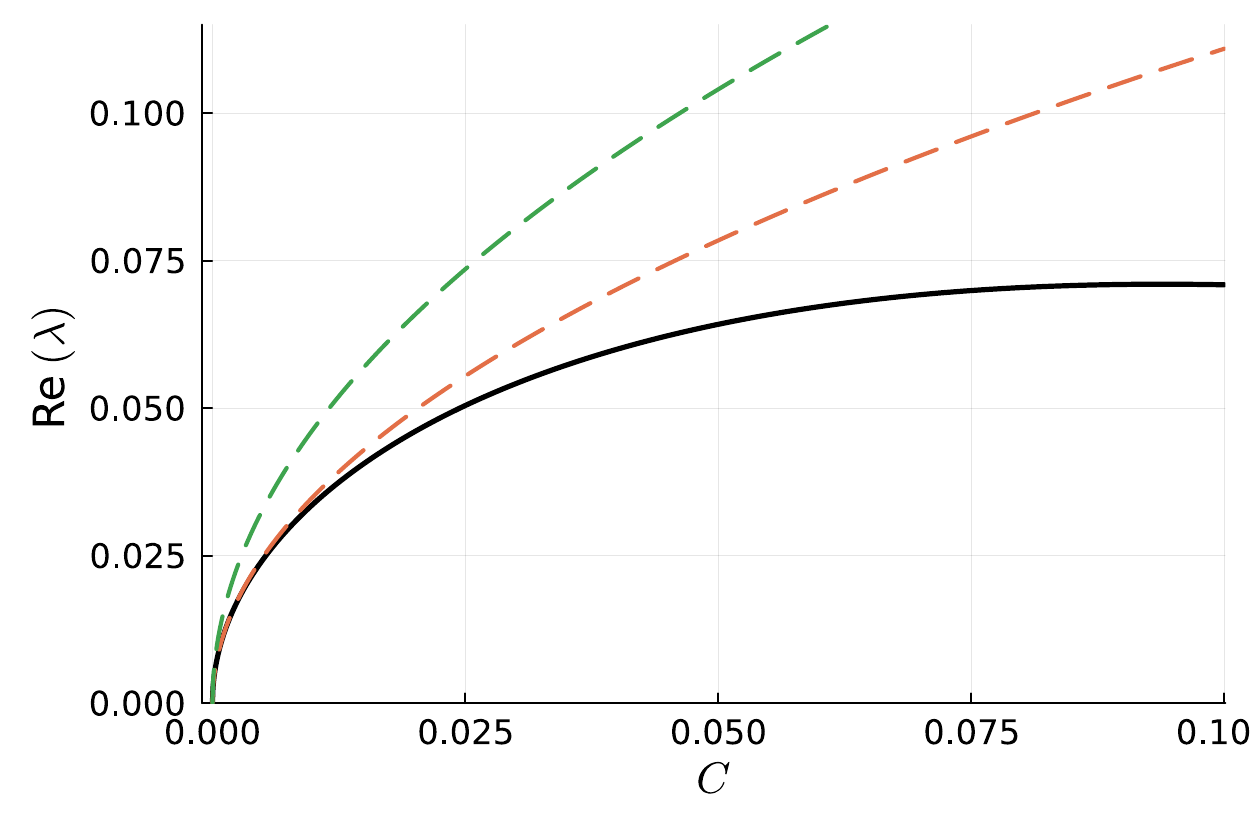}
    \caption{Real part of the eigenvalue of $\mathcal{L}$ for the onsite cubic-quintic Maxwell front for $C \in [0,0.1]$. The black solid line shows the numerical result, the orange dashed line shows the asymptotic approximation \eqref{eqn:3-5_ac_lambda_approx}, and the green dashed line shows the one-site approximation \eqref{eqn:3-5 ac lambda approx one site}.}
    \label{fig:3-5_ac_eigval_compare}
\end{figure}
\par
Near the continuum limit $(C \gg 1)$, a similar exponential asymptotic analysis procedure leads to the asymptotic solution of the stationary Maxwell front as
\begin{equation} \label{eqn:3-5_varphi_expansion_truncated_with_remainder}
        \varphi(x) \sim \varphi_{0}(x) + \sum_{j=1}^{N-1} h^{2j} \varphi_{j}(x) + R_{N}(x), \ \ (h \ll 1, x > 0).
\end{equation}
where $h = \sqrt{2/C}$ as before. The leading-order solution, $\varphi_{0}(\tilde{x})$, written in the shifted spatial variable $\tilde{x} = x - x_{0}$, is the continuum heteroclinic orbit:
\begin{equation} \label{eqn:3-5_cont_varphi0_leading_order}
        \varphi_{0}(\tilde{x}) = \frac{\sqrt{3}}{2}\left(1 + e^{-\sqrt{3} \tilde{x}/2}\right)^{-1/2}.
\end{equation}
The poles of $\varphi_{0}(\tilde{x})$ are $\zeta = \zeta_{k} = i 2\pi(1 + 2k)/\sqrt{3}$, $k \in \mathbb{Z}$, where $\varphi_{0}(\tilde{x}) \sim O(\tilde{x}-\zeta)^{-1/2}$ near the poles. The remainder term $R_{N}(\tilde{x})$ in this case is given by
\begin{equation}
    R_{N}(\tilde{x}) \sim -2\pi \Lambda_{1} h^{-3} e^{-\frac{4\pi^{2}}{\sqrt{3} h}} \left[ \text{Re}(G(\tilde{x})) \sin\left(\frac{2\pi \tilde{x}}{h}\right) - \text{Im}(G(\tilde{x})) \cos\left(\frac{2\pi \tilde{x}}{h}\right) \right], \quad (h \ll 1).
\end{equation}
where $G(x) = \varphi_{0}'(x)\int_{\zeta}^{x} [\varphi_{0}'(s)]^{-2} ds$ and the Stokes constant $\Lambda_{1}$ is $\Lambda_{1} \approx -89$. The important aspect of the asymptotic solution \eqref{eqn:3-5_varphi_expansion_truncated_with_remainder} is its behavior in the far-field $\tilde{x} \to \infty$, which in this case is dominated by the remainder term:
\begin{equation}
    \varphi(\tilde{x}) \sim \dfrac{16\pi}{3\sqrt{3}} \Lambda_{1} h^{-3} e^{-\frac{4\pi^{2}}{\sqrt{3} h}} e^{\sqrt{3} \tilde{x} / 2} \sin\left(\frac{2\pi \tilde{x}}{h}\right), \quad (h \ll 1, \tilde{x} \gg 1),
\end{equation}
from which we see that $\varphi(\tilde{x})$ remains bounded if $n = 0$ or $n = 1/2$ (modulo integer shifts), corresponding to the onsite and intersite Maxwell fronts, respectively.
\par
Analysis of the bifurcation of the zero eigenvalue of $L_{+}$ near the continuum limit can be done in a similar manner. It is found that, similar to the quadratic-cubic case, the linearized operator $L_{+}$ has an exponentially small negative eigenvalue for onsite Maxwell fronts, and a positive one for intersite Maxwell fronts. Therefore, the stability of the onsite and intersite Maxwell fronts is the same as in the near-anticontinuum limit $(0 < C \ll 1)$, i.e., the onsite is unstable and the intersite is stable.

As in the quadratic-cubic case, Maxwell fronts with codes of length $> 1$ do not survive as $C \to \infty$ as they undergo fold bifurcation and is connected to branches of other fronts of length $> 1$. The bifurcation diagram for cubic-quintic Maxwell fronts of code lengths $N \leq 2$ is presented in Figure \ref{fig:3-5_bifurcation_diag_maxwell_012}. However, the qualitative behavior of the bifurcation is not necessarily the same as in the quadratic-cubic case. For example, in the cubic-quintic case, the $(mm)$ branch is connected to the $(ml)$ branch, whereas it was connected to the $(ul)$ branch in the quadratic-cubic case. In the cubic-quintic case, it is the $(um)$ branch which is connected to the $(ul)$ branch.

\begin{figure}[tbhp]
    \centering
    \includegraphics[width=0.6\linewidth]{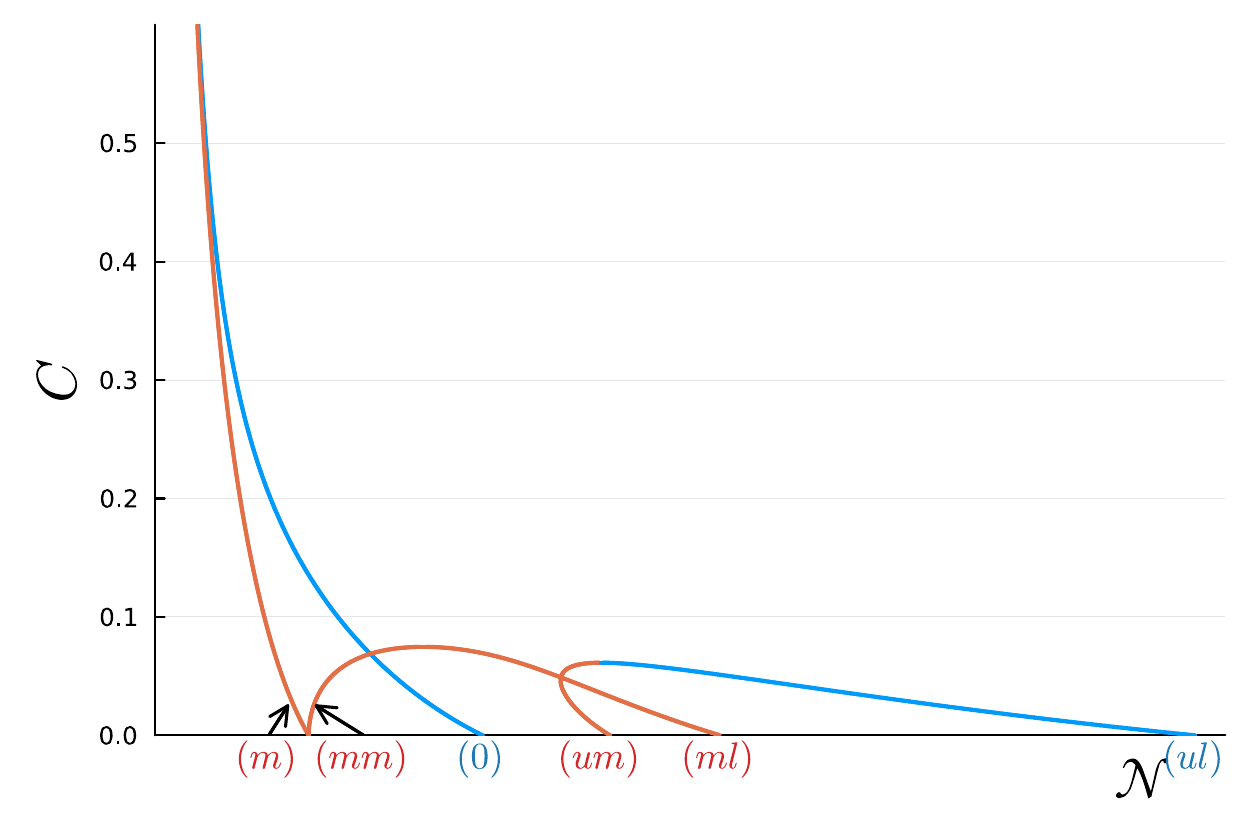}
    \caption{Bifurcation diagram of cubic-quintic Maxwell fronts with codes of length $0$, $1$, and $2$. The blue lines represent stable fronts and the red lines represent unstable fronts.}
    \label{fig:3-5_bifurcation_diag_maxwell_012}
\end{figure}

We note the differences between the cubic-quintic Maxwell fronts and cubic-quintic fronts of the dark soliton type, which were studied in \cite{Maluckov2007}. In the latter case, it was shown that the onsite cubic-quintic dark soliton can either be marginally stable, oscillatory unstable, or exponentially unstable (instability due to nonzero real eigenvalues) depending on the value of the parameter $\mu$. The intersite cubic-quintic dark-soliton, on the other hand, is always unstable. These differ from the case of Maxwell fronts: the onsite cubic-quintic Maxwell front is unstable due to a pair of real eigenvalues, whereas the intersite cubic-quintic Maxwell front is always stable, as established by the preceding analysis.

\section{Conclusion}\label{sec5}

In this work, we have investigated the stability and bifurcation behaviors of Maxwell fronts in DNLS equations with quadratic-cubic and cubic-quintic nonlinearities. By analyzing the eigenvalue spectra of the corresponding linearized operators, we have gained key insights into the dynamics of these front solutions across various coupling regimes.

For both the quadratic-cubic and cubic-quintic cases, we observed that uniform solutions exhibit multistability and can transition into Maxwell fronts as the coupling becomes nonzero. The stability analysis revealed that the onsite configuration is unstable, while the intersite configuration remains stable, in contrast to the well-established stability properties of discrete dark solitons.

While the current study focuses on the one-dimensional DNLS equation with specific nonlinearities, such as quadratic-cubic and cubic-quintic, an important direction for future work is to extend these investigations to higher-dimensional lattices. In higher dimensions, Maxwell fronts could exhibit more complex dynamical behaviors. Understanding their stability, bifurcation, and multistability in two or more dimensions could offer valuable insights into more intricate physical systems. Additionally, exploring nonlocal interactions, where coupling extends beyond nearest neighbors, may uncover new phenomena. These extensions are relevant to fields such as nonlinear photonics, where multi-dimensional and nonlocal effects are commonly observed. Further studies on these topics will be reported in future work.

\section*{acknowledgements}

The authors would like to express their sincere gratitude to Dr.\ Rudy Kusdiantara for his invaluable discussions and insightful suggestions during the initial stages of this research. HS acknowledges support by Khalifa University through Research \& Innovation Grants (No.\ 8474000617/RIG-S-2023-031 and No.\ 8474000789/RIG-S-2024-070). During the preparation of this work, the authors used Grammarly and ChatGPT to enhance language and readability. After using these tools/services, the authors reviewed and edited the content as needed and take full responsibility for the content of the publication.


\section*{conflict of interest}

The authors declare that they have no conflict of interest regarding the publication of this work.


\printendnotes

\bibliography{sample}

\end{document}